\def\bq#1#2{\beta_{#1}^{(#2)}}
\def\cocoa{{\hbox{\rm C\kern-.13em o\kern-.07em C\kern-.13em o\kern-.15em A}}}
\def\F{\mathcal{F}}
\def\G{\mathcal{G}}
\def\H{\mathcal{H}}
\def\I{\mathcal{I}}
\def\Isat{\mathcal{I}^{\mathrm{sat}}}
\def\J{\mathcal{J}}
\def\kk{\mathbbm{k}}
\def\M{\mathcal{M}}
\def\mf{\mathfrak{m}}
\def\mult#1#2#3{{#1}_{#2}(#3)}
\def\NN{\mathbbm{N}}
\def\nmult#1#2#3{\overline{#1}_{#2}(#3)}
\def\P{\mathcal{P}}
\def\V{\mathcal{V}}
\def\X{\mathcal{X}}
\DeclareMathOperator{\cls}{cls}
\DeclareMathOperator{\ch}{char}
\DeclareMathOperator{\depth}{depth}
\DeclareMathOperator{\gin}{gin}
\DeclareMathOperator{\lt}{lt}
\DeclareMathOperator{\pd}{pd}
\DeclareMathOperator{\reg}{reg}
\DeclareMathOperator{\sat}{sat}
\begin{document}
\title{Quasi-Stability versus Genericity}
\author{Amir Hashemi\inst{1} \and Michael Schweinfurter\inst{2} 
        \and Werner M. Seiler\inst{2}}
\institute{Department of Mathematical Sciences, 
Isfahan University of Technology\\ Isfahan, 84156-83111, Iran;
\email{Amir.Hashemi@cc.iut.ac.ir}
\and Institut f\"{u}r Mathematik,
Universit\"at Kassel\\
Heinrich-Plett-Stra\ss e 40, 34132 Kassel, Germany\\
\email{[michael.schweinfurter,seiler]@mathematik.uni-kassel.de}
}
\maketitle

\begin{abstract}
  Quasi-stable ideals appear as leading ideals in the theory of Pommaret
  bases.  We show that quasi-stable leading ideals share many of the
  properties of the generic initial ideal.  In contrast to genericity,
  quasi-stability is a characteristic independent property that can be
  effectively verified.  We also relate Pommaret bases to some invariants
  associated with local cohomology, exhibit the existence of linear quotients
  in Pommaret bases and prove some results on componentwise linear ideals.
\end{abstract}

\section{Introduction}

The generic initial ideal of a polynomial ideal
$0\neq\I\unlhd\P=\kk[\X]=\kk[x_{1},\dots,x_{n}]$ was defined by Galligo
\cite{gall:weier} for the reverse lexicographic order and $\ch{\kk}=0$; the
extension to arbitrary term orders and characteristics is due to Bayer and
Stillman \cite{bs:reverse}.  Extensive discussions can be found in
\cite[Sect.~15.9]{de:ca}, \cite[Chapt.~4]{hh:monid} and \cite{mlg:gin}.  A
characteristic feature of the generic initial ideal is that it is Borel-fixed,
a property depending on the characteristics of $\kk$.

Quasi-stable ideals are known under many different names like ideals of nested
type \cite{bg:scmr}, ideals of Borel type \cite{hpv:ext} or weakly stable
ideals \cite{cs:reg}.  They appear naturally as leading ideals in the theory
of Pommaret bases \cite{wms:comb2}, a special class of Gr\"obner bases with
additional combinatorial properties.  The notion of quasi-stability is
characteristic independent.

The generic initial ideal has found quite some interest, as many invariants
take the same value for $\I$ and $\gin{\I}$, whereas arbitrary leading ideals
generally lead to larger values.  However, there are several problems with
$\gin{\I}$: it depends on $\ch{\kk}$; there is no effective test known to
decide whether a given leading ideal is $\gin{\I}$ and thus one must rely
on expensive random transformations for its construction.  The main point of
the present work is to show that quasi-stable leading ideals enjoy many of the
properties of $\gin{\I}$ and can nevertheless be effectively detected and
deterministically constructed.

Throughout this article, $\P=\kk[\X]$ denotes a polynomial ring in the
variables $\X=\{x_{1},\dots,x_{n}\}$ over an infinite field $\kk$ of arbitrary
characteristic and $0\neq\I\lhd\P$ a proper homogeneous ideal.  When
considering bases of $\I$, we will always assume that these are homogeneous,
too.  $\mf=\langle\X\rangle\lhd\P$ is the homogeneous maximal ideal.  In order
to be consistent with \cite{wms:comb1,wms:comb2}, we will use a non-standard
convention for the reverse lexicographic order: given two arbitrary terms
$x^{\mu},x^{\nu}$ of the same degree, $x^{\mu}\prec_{\mathrm{revlex}}x^{\nu}$
if the first non-vanishing entry of $\mu-\nu$ is positive.  Compared with the
usual convention, this corresponds to a reversion of the numbering of the
variables $\X$.

\section{Pommaret Bases}\label{sec:pombas}

Pommaret bases are a special case of \emph{involutive bases}; see
\cite{wms:comb1} for a general survey.  The algebraic theory of Pommaret bases
was developed in \cite{wms:comb2} (see also \cite[Chpts.~3-5]{wms:invol}).
Given an exponent vector $\mu=[\mu_{1},\dots,\mu_{n}]\neq0$ (or the term
$x^{\mu}$ or a polynomial $f\in\P$ with $\lt{f}=x^{\mu}$ for some fixed term
order), we call $\min{\{i\mid\mu_{i}\neq0\}}$ the \emph{class} of $\mu$ (or
$x^{\mu}$ or $f$), denoted by $\cls{\mu}$ (or $\cls{x^{\mu}}$ or
$\cls{f}$). Then the \emph{multiplicative variables} of $x^{\mu}$ or $f$ are
$\mult{\X}{P}{x^{\mu}}=\mult{\X}{P}{f}=\{x_{1},\dots,x_{\cls{\mu}}\}$.  We say
that $x^{\mu}$ is an \emph{involutive divisor} of another term $x^{\nu}$, if
$x^{\mu}\mid x^{\nu}$ and $x^{\nu-\mu}\in\kk[x_{1},\dots,x_{\cls{\mu}}]$.
Given a finite set $\F\subset\P$, we write $\deg{\F}$ for the maximal degree
and $\cls{\F}$ for the minimal class of an element of $\F$.

\begin{definition}\label{def:pombas}
  Assume first that the finite set $\H\subset\P$ consists only of terms.  $\H$
  is a \emph{Pommaret basis} of the monomial ideal $\I=\langle\H\rangle$, if
  as a $\kk$-linear space
  \begin{equation}\label{eq:pombas}
    \bigoplus_{h\in\H}\kk[\mult{\X}{P}{h}]\cdot h=\I
  \end{equation}
  (in this case each term $x^{\nu}\in\I$ has a unique involutive divisor
  $x^{\mu}\in\H$).  A finite polynomial set $\H$ is a \emph{Pommaret basis} of
  the polynomial ideal $\I$ for the term order $\prec$, if all elements of
  $\H$ possess distinct leading terms and these terms form a Pommaret basis of
  the leading ideal $\lt{\I}$.
\end{definition}

Pommaret bases can be characterised similarly to Gr\"obner bases.  However,
involutive standard representations are unique.  Furthermore, the existence of
a Pommaret basis implies a number of properties that usually hold only
generically.

\begin{proposition}[{\cite[Thm.~5.4]{wms:comb1}}]\label{prop:pbisr}
  The finite set $\H\subset\I$ is a Pommaret basis of the ideal $\I\lhd\P$ for
  the term order $\prec$, if and only if every polynomial $0\neq f\in\I$
  possesses a unique involutive standard representation
  $f=\sum_{h\in\H}P_{h}h$ where each non-zero coefficient
  $P_{h}\in\kk[\mult{\X}{P}{h}]$ satisfies $\lt{(P_{h}h)}\preceq\lt{(f)}$.
\end{proposition}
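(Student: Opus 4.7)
My plan is to prove the two implications separately, in both cases relying on the defining property~(\ref{eq:pombas}) of monomial Pommaret bases and the resulting fact that every $x^{\nu}$ in a monomial Pommaret ideal has a \emph{unique} involutive divisor among its basis elements.

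For the ``only if'' direction, I assume that $\H$ is a Pommaret basis and build the involutive standard representation of $0\neq f\in\I$ by an involutive division algorithm. Since $\lt{(f)}\in\lt{\I}$, the Pommaret basis property of $\lt{\H}$ produces a unique $h\in\H$ and a term $x^{\alpha}\in\kk[\mult{\X}{P}{h}]$ with $\lt{(f)}=x^{\alpha}\lt{(h)}$; replacing $f$ by $f-\lambda x^{\alpha}h$ for a suitable $\lambda\in\kk$ yields an element of $\I$ with a strictly smaller leading term. Iterating and invoking the well-ordering of $\prec$ produces the sought representation $f=\sum_{h}P_{h}h$ with $P_{h}\in\kk[\mult{\X}{P}{h}]$ and $\lt{(P_{h}h)}\preceq\lt{(f)}$. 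For uniqueness, a difference $\sum_{h}(P_{h}-Q_{h})h=0$ of two such representations cannot carry any nonzero summand: the maximal leading term $\lt{((P_{h}-Q_{h})h)}$ would have to cancel, but two distinct $h_{1},h_{2}\in\H$ realising the same leading term $x^{\nu}$ would force both $\lt{(h_{1})}$ and $\lt{(h_{2})}$ to be involutive divisors of $x^{\nu}$, contradicting uniqueness.

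For the ``if'' direction, I assume that every $0\neq f\in\I$ has a unique involutive standard representation. First I verify that distinct elements of $\H$ have distinct leading terms: if $\lt{(h_{1})}=\lt{(h_{2})}$ with $h_{1}\neq h_{2}$, pick $\lambda\in\kk$ so that $\lt{(h_{1}-\lambda h_{2})}\prec\lt{(h_{1})}$, expand $g:=h_{1}-\lambda h_{2}\in\I$ involutively as $g=\sum_{h}Q_{h}h$, and assemble $h_{1}=\lambda h_{2}+\sum_{h}Q_{h}h$; the support constraints on the $Q_{h}$ force this to differ from the trivial representation $h_{1}=1\cdot h_{1}$, contradicting uniqueness. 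Next, every $x^{\nu}\in\lt{\I}$ admits an involutive divisor among $\lt{\H}$: choose $f\in\I$ with $\lt{(f)}=x^{\nu}$, and in its involutive representation some summand $\lt{(P_{h}h)}$ must equal $x^{\nu}$. Finally, a directness check for~(\ref{eq:pombas}) applied to $\lt{\I}$, again using uniqueness of involutive divisors, shows that $\lt{\H}$ is a Pommaret basis of $\lt{\I}$, hence $\H$ is a Pommaret basis of $\I$.

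The main obstacle I anticipate is the careful control of leading terms throughout: the constraint $P_{h}\in\kk[\mult{\X}{P}{h}]$ restricts the coefficients in a way that is crucial to the non-cancellation arguments, and at every step uniqueness of involutive divisors must be invoked to prevent two distinct summands from sharing a leading term. Keeping this invariant intact — rather than slipping into ordinary Gr\"obner-style reasoning that ignores the multiplicativity constraint — is the subtle point that ties both directions together.
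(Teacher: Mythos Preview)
The paper does not supply a proof of this proposition; it is quoted from \cite[Thm.~5.4]{wms:comb1} and used as a black box throughout. So there is no in-paper argument to compare your proposal against.

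Your approach is the standard one and is essentially correct. The ``only if'' direction is fine: the involutive division algorithm terminates by the well-ordering, and the uniqueness argument via non-cancellation of leading terms is exactly right. In the ``if'' direction your distinct-leading-terms step and the existence of an involutive divisor for every $x^{\nu}\in\lt{\I}$ are also correct. The one place that needs sharpening is the last sentence: you say directness of the sum in~(\ref{eq:pombas}) follows ``again using uniqueness of involutive divisors'', but uniqueness of involutive divisors \emph{is} the directness you are trying to establish, so as written this is circular. What you must invoke here is the hypothesis, namely uniqueness of involutive \emph{standard representations}: if $\lt{h_{1}}$ and $\lt{h_{2}}$ with $h_{1}\neq h_{2}$ were both involutive divisors of some $x^{\nu}$, write $x^{\nu}=x^{\alpha}\lt{h_{1}}=x^{\beta}\lt{h_{2}}$ with $x^{\alpha}\in\kk[\mult{\X}{P}{h_{1}}]$ and $x^{\beta}\in\kk[\mult{\X}{P}{h_{2}}]$, choose $c\in\kk$ so that $x^{\alpha}h_{1}-c\,x^{\beta}h_{2}$ has leading term strictly below $x^{\nu}$, and expand the difference involutively; then $x^{\alpha}h_{1}$ acquires two distinct involutive standard representations, exactly as in your distinct-leading-terms argument. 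With this correction the proof goes through.
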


\begin{proposition}[{\cite[Cor.~7.3]{wms:comb1}}]\label{prop:pbcrit}
  Let $\H$ be a finite set of polynomials and $\prec$ a term order such that
  no leading term in $\lt{\H}$ is an involutive divisor of another one.  The
  set $\H$ is a Pommaret basis of the ideal $\langle\H\rangle$ with respect to
  $\prec$, if and only if for every $h\in\H$ and every non-multiplicative
  index $\cls{h}<j\leq n$ the product $x_{j}h$ possesses an involutive
  standard representation with respect to $\H$.
\end{proposition}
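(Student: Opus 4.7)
The forward direction is an immediate application of Proposition~\ref{prop:pbisr}: if $\H$ is a Pommaret basis of $\I:=\langle\H\rangle$, then every $x_{j}h\in\I$ admits an involutive standard representation. The substance lies in the converse, which I would treat first on the monomial level and then lift to polynomials.

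For the monomial part, taking leading terms in an involutive standard representation of $x_{j}h$ shows that $x_{j}\lt{h}$ is involutively divisible by some $\lt{h^{*}}\in\lt{\H}$. I would use this to prove that $\lt{\H}$ is a Pommaret basis of $\J:=\langle\lt{\H}\rangle$. Setting $M:=\bigcup_{h\in\H}\lt{h}\cdot\kk[\mult{\X}{P}{h}]\subseteq\J$, the crucial step is to show that $M$ is closed under multiplication by each variable. For $x^{\mu}=\lt{h}\cdot x^{\alpha}\in M$ and $x_{j}$ multiplicative for $h$ this is obvious; for $x_{j}$ non-multiplicative, write $x_{j}\lt{h}=\lt{h^{*}}\cdot x^{\beta}$ with $x^{\beta}$ multiplicative for $h^{*}$, observe that $\cls{h^{*}}\geq\cls{h}$ (since $\lt{h^{*}}\mid x_{j}\lt{h}$ and the latter has class $\cls{h}$), and conclude that $x^{\alpha}$ remains multiplicative for $h^{*}$, so $x_{j}x^{\mu}=\lt{h^{*}}\cdot(x^{\beta}x^{\alpha})\in M$. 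Hence $M=\J$. Uniqueness of the involutive divisor (equivalently, disjointness of the cones in $M$) follows from the hypothesis via the elementary fact that any two distinct involutive divisors of a common term are always in an involutive divisibility relation with each other, forcing the one of smaller class to involutively divide the one of larger class.

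For the polynomial part, Proposition~\ref{prop:pbisr} reduces matters to producing an involutive standard representation of every $0\neq f\in\I$. Starting from any representation $f=\sum q_{i}h_{i}$, I would rewrite every summand $x^{\alpha}h_{i}$ in which $x^{\alpha}$ contains a variable $x_{j}$ non-multiplicative for $h_{i}$ by applying the hypothesised involutive standard representation of $x_{j}h_{i}$. Iterating, this produces an expression $f=\sum_{h\in\H}P_{h}h$ with $P_{h}\in\kk[\mult{\X}{P}{h}]$. Since the leading terms $\lt{(P_{h}h)}$ are pairwise distinct (each has $\lt{h}$ as involutive divisor, so uniqueness from the monomial part prevents coincidences), $\lt{f}$ equals $\lt{(P_{h^{*}}h^{*})}$ for the summand of largest leading term and thus lies in $\J$, giving $\lt{\I}=\J$; combined with the monomial part this yields that $\lt{\H}$ is a Pommaret basis of $\lt{\I}$, which by Definition~\ref{def:pombas} is precisely the conclusion.

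The principal obstacle is termination of the rewriting procedure. I plan to track the multiset of pairs $(x^{\alpha}\lt{h_{i}},\delta(x^{\alpha},h_{i}))$ over the summands currently present, where $\delta(x^{\alpha},h_{i})$ counts the non-multiplicative variables appearing in $x^{\alpha}$ with respect to $h_{i}$. A rewriting step replaces one summand with pair $(\tau,d)$ by a \emph{leading} summand whose pair is $(\tau,d')$ with $d'<d$ (using $\cls{h^{*}}\geq\cls{h_{i}}$ and that the multiplicative part of the involutive representation of $x_{j}h_{i}$ contributes nothing to $\delta$) together with further summands whose leading term is strictly $\prec\tau$. Under the lexicographic extension of $\prec$ this multiset strictly decreases, and Noetherianity of $\prec$ then yields termination.
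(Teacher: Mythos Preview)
The paper does not give its own proof of this proposition; it is stated with a citation to \cite[Cor.~7.3]{wms:comb1}. Your argument follows essentially the standard route found there: first establish the monomial statement (the multiplicative cones of $\lt{\H}$ cover and partition $\langle\lt{\H}\rangle$), then lift to polynomials via a rewriting procedure whose termination must be justified.

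Two corrections. First, a harmless slip: in your disjointness argument the direction of involutive divisibility is reversed. If $\lt{h_{1}}$ and $\lt{h_{2}}$ are both involutive divisors of $x^{\nu}$ with $\cls{h_{1}}<\cls{h_{2}}$, then for every index $j>\cls{h_{1}}$ one has $(\lt{h_{1}})_{j}=\nu_{j}\geq(\lt{h_{2}})_{j}$, so it is $\lt{h_{2}}$ (the one of \emph{larger} class) that involutively divides $\lt{h_{1}}$. This does not affect your conclusion, since the hypothesis excludes either direction.

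Second, and this is a genuine gap, your termination measure fails as stated. If $\delta(x^{\alpha},h_{i})$ counts the \emph{distinct} non-multiplicative variables appearing in $x^{\alpha}$, then $d'<d$ need not hold for the leading summand. Take $x^{\alpha}=x_{j}^{2}$ with $x_{j}$ non-multiplicative for $h_{i}$; if the involutive divisor $h^{*}$ of $x_{j}\lt{h_{i}}$ satisfies $\cls{h^{*}}<j$ (which is perfectly possible, e.g.\ when $\cls{h^{*}}=\cls{h_{i}}$), then the new coefficient $x^{\alpha'}\cdot\lt{P_{h^{*}}}=x_{j}\cdot\lt{P_{h^{*}}}$ still contains $x_{j}$ as its sole non-multiplicative variable, so $d'=d=1$ while the first component $\tau$ is unchanged and the multiset does not decrease. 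The remedy is to let $\delta$ be the total \emph{degree} of $x^{\alpha}$ in the non-multiplicative variables of $h_{i}$. Removing one factor $x_{j}$ then drops this degree by exactly one; since the non-multiplicative variables of $h^{*}$ form a subset of those of $h_{i}$ (from $\cls{h^{*}}\geq\cls{h_{i}}$) and $\lt{P_{h^{*}}}$ is purely multiplicative for $h^{*}$, one obtains $d'\leq d-1$ for the leading summand, and your multiset argument then goes through.
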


\begin{theorem}[{\cite[Cor.~3.18, Prop.~3.19, Prop.~4.1]{wms:comb2}}]
  \label{thm:pbprop}
   Let $\H$ be a Pommaret basis of the ideal $\I\lhd\P$ for an order $\prec$.
  \begin{description}
  \item[(i)] If $D=\dim{(\P/\I)}$, then $\{x_{1},\dots,x_{D}\}$ is the unique
    maximal strongly independent set modulo $\I$ (and thus
    $\lt{\I}\cap\kk[x_{1},\dots,x_{D}]=\{0\}$).
  \item[(ii)] The restriction of the canonical map $\P\rightarrow\P/\I$ to the
    subring $\kk[x_{1},\dots,x_{D}]$ defines a Noether normalisation.
  \item[(iii)] If $d=\min_{h\in\H}{\cls{h}}$ is the minimal class of a
    generator in $\H$ and $\prec$ is the reverse lexicographic order, then
    $x_{1},\dots,x_{d-1}$ is a maximal $\P/\I$-regular sequence and thus
    $\depth{\P/\I}=d-1$.
  \end{description}
\end{theorem}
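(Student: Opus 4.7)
My plan is to handle the three parts in succession, each drawing on the combinatorial content of the Pommaret decomposition~(\ref{eq:pombas}). For (i), I would first show strong independence of $\{x_{1},\dots,x_{D}\}$ by working with $\lt{\I}$: every leading term $\lt{h}$ for $h\in\H$ has class at least $d=\min_{h\in\H}\cls{h}$, so involves only variables $x_{j}$ with $j\geq d$, giving $\lt{\I}\cap\kk[x_{1},\dots,x_{d-1}]=\{0\}$. To extend this to the full $D=\dim{(\P/\I)}$, I would derive the Hilbert series
\begin{equation*}
H_{\P/\lt{\I}}(t)=\frac{1-\sum_{h\in\H}t^{\deg{h}}(1-t)^{n-\cls{h}}}{(1-t)^{n}}
\end{equation*}
from~(\ref{eq:pombas}) and read off $D$ as the pole order at $t=1$; equivalently, one uses the complementary Pommaret decomposition of $\P/\lt{\I}$ into cones, in which the cone of the unit term has $\{x_{1},\dots,x_{D}\}$ as its multiplicative variables. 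For uniqueness, I would argue that any strongly independent subset $S\subset\X$ of size $D$ must coincide with $\{x_{1},\dots,x_{D}\}$: if $S$ contained some $x_{i}$ with $i>D$ and omitted some $x_{j}$ with $j\leq D$, the Pommaret structure produces an $\lt{h}$ with support inside $S$, contradicting independence.

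For (ii), the inclusion $\kk[x_{1},\dots,x_{D}]\hookrightarrow\P/\I$ is immediate from strong independence. Finiteness of $\P/\I$ as a module over $\kk[x_{1},\dots,x_{D}]$ then follows from the complementary Pommaret decomposition, which expresses $\P/\I$ as a finite direct sum of cones, each a finitely generated $\kk[x_{1},\dots,x_{D}]$-module.

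For (iii), the class bound gives that no $\lt{h}$ involves any $x_{i}$ with $i<d$. Hence if $x_{i}f\in\I$ for some $i<d$ and $f\neq0$, then $x_{i}\lt{f}=\lt{(x_{i}f)}\in\lt{\I}$ is divisible by some $\lt{h}$; since $x_{i}\nmid\lt{h}$, we deduce $\lt{h}\mid\lt{f}$, giving $\lt{f}\in\lt{\I}$. Iterating the reduction shows $f\in\I$, so $x_{i}$ is a non-zero-divisor on $\P/\I$, and by induction $x_{1},\dots,x_{d-1}$ is a regular sequence. For maximality, I would pass to the residue ring $\P/(\I+\langle x_{1},\dots,x_{d-1}\rangle)$: using a generator $h^{*}\in\H$ of minimal class $d$ together with the reverse-lex structure, the image of a suitable multiple of $h^{*}$ yields a nonzero socle element annihilated by $\mf$, forcing $\depth{(\P/\I)}=d-1$.

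The main technical obstacles are the uniqueness claim in (i) and the socle identification in (iii). The former requires a refined combinatorial argument on the Pommaret structure beyond the simple class bound, while the latter demands careful tracking of how Pommaret bases behave under quotients by an initial regular sequence---in particular, how the reverse lex order lets minimal-class generators induce non-trivial elements of the socle in the residue ring.
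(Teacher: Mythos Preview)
The paper does not supply a proof of this theorem; it is imported from \cite[Cor.~3.18, Prop.~3.19, Prop.~4.1]{wms:comb2} and stated without argument. There is therefore no ``paper's own proof'' to compare your proposal against---any substantive check would have to be made against the cited source rather than the present article.

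That said, one genuine gap in your sketch deserves mention. In part~(iii) your argument establishes only that each individual $x_{i}$ with $i<d$ is a non-zero-divisor on $\P/\I$; this is strictly weaker than the assertion that $x_{1},\dots,x_{d-1}$ form a \emph{regular sequence}, which requires $x_{i}$ to remain a non-zero-divisor on the successive quotients $\P/(\I+\langle x_{1},\dots,x_{i-1}\rangle)$. Your phrase ``by induction'' hides exactly the step where the reverse lexicographic hypothesis enters: one needs that setting $x_{1}=\cdots=x_{i-1}=0$ in $\H$ again yields a Pommaret basis of the quotient ideal with minimal class still equal to $d$, and this stability under truncation is a special feature of $\prec_{\mathrm{revlex}}$ (cf.\ the reduction used in the proof of Proposition~\ref{prop:qi}). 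As written, your non-zero-divisor argument is order-independent, which should already signal that it cannot by itself yield the full conclusion of~(iii). The uniqueness claim in~(i) and the socle computation for maximality in~(iii), which you correctly flag as the technical obstacles, remain unresolved in your outline.
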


The involutive standard representations of the non-multiplicative products
$x_{j}h$ appearing in Proposition \ref{prop:pbcrit} induce a basis of the
first syzygy module.  This observation leads to a stronger version of
Hilbert's syzygy theorem.

\begin{theorem}[{\cite[Thm.~6.1]{wms:comb2}}]\label{thm:pomres}
  Let $\H$ be a Pommaret basis of the ideal\/ $\I\subseteq\P$.  If we denote
  by $\bq{0}{k}$ the number of generators $h\in\H$ with $\cls{\lt{h}}=k$ and
  set $d=\cls{\H}$, then $\I$ possesses a finite free resolution
  \begin{equation}\label{eq:pomres}
    0\longrightarrow\P^{r_{n-d}}\longrightarrow\cdots\longrightarrow
    \P^{r_1}\longrightarrow\P^{r_0}\longrightarrow\I\longrightarrow0
  \end{equation}
  of length\/ $n-d$ where the ranks of the free modules are given by
  \begin{equation}\label{eq:resrank}
    r_i=\sum_{k=d}^{n-i}\binom{n-k}{i}\bq{0}{k}\;.
  \end{equation}
\end{theorem}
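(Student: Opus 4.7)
The plan is to construct the resolution iteratively by exploiting the fact that the syzygies arising from non-multiplicative products themselves carry a Pommaret-like structure that can be iterated. The starting point is Proposition \ref{prop:pbcrit}: for every $h\in\H$ and every non-multiplicative index $j$ with $\cls{h}<j\leq n$, one obtains a unique involutive standard representation $x_{j}h=\sum_{h'\in\H}P_{h'}^{(j,h)}h'$ with $P_{h'}^{(j,h)}\in\kk[\mult{\X}{P}{h'}]$, which directly provides a syzygy $S_{j,h}=x_{j}e_{h}-\sum_{h'}P_{h'}^{(j,h)}e_{h'}$ on the generators of $\I$ in the free module $\P^{r_{0}}$.

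First I would show that the set of all such $S_{j,h}$ generates the first syzygy module. This follows from the uniqueness part of Proposition \ref{prop:pbisr} applied to the Schreyer-type order induced on $\P^{r_{0}}$: any syzygy can be reduced by the $S_{j,h}$ to one whose leading term is a multiplicative product $x^{\mu}e_{h}$ with $x^{\mu}\in\kk[\mult{\X}{P}{h}]$, and such a reduced syzygy must vanish. Next I would equip the syzygies themselves with a class function by declaring $\cls{S_{j,h}}=j$ and their multiplicative variables to be $\{x_{1},\dots,x_{j}\}$, and check that with respect to the Schreyer-type order the $S_{j,h}$ satisfy the analog of the involutive decomposition \eqref{eq:pombas}, i.e.\ they form a Pommaret basis of $\mathrm{Syz}_{1}(\H)$ in a generalised sense. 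The number of first syzygies coming from $h$ of class $k$ is $n-k$, yielding $r_{1}=\sum_{k=d}^{n-1}(n-k)\bq{0}{k}=\sum_{k=d}^{n-1}\binom{n-k}{1}\bq{0}{k}$, as predicted.

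The resolution is then obtained by iteration. At stage $i$, the generators of the $i$-th syzygy module are indexed by pairs $(J,h)$ where $h\in\H$ and $J=\{j_{1}<\cdots<j_{i}\}\subseteq\{\cls{h}+1,\dots,n\}$, each such generator having class $j_{i}$ and multiplicative variables $\{x_{1},\dots,x_{j_{i}}\}$. An induction on $i$ shows that at each stage one again has a Pommaret basis of the corresponding syzygy module, whose non-multiplicative products admit involutive standard representations providing the generators at stage $i+1$. Counting the pairs $(J,h)$ with $|J|=i$ gives
\begin{equation*}
  r_{i}=\sum_{h\in\H}\binom{n-\cls{h}}{i}=\sum_{k=d}^{n}\binom{n-k}{i}\bq{0}{k}=\sum_{k=d}^{n-i}\binom{n-k}{i}\bq{0}{k},
\end{equation*}
since $\binom{n-k}{i}$ vanishes for $k>n-i$. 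For $i=n-d$ only the generators of minimal class $d$ contribute (they alone possess $n-d$ non-multiplicative variables), and for $i>n-d$ no generators remain, so the resolution has length exactly $n-d$.

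The principal technical obstacle is the inductive step verifying that the syzygies at each stage again form a Pommaret basis of the next syzygy module in the appropriate Schreyer-type order. This requires carefully setting up the Schreyer order so that the class and multiplicative variables of a syzygy $S_{J,h}$ are determined by $\max J$ and $h$, and then proving that the non-multiplicative products $x_{\ell}S_{J,h}$ (for $\ell>\max J$) have involutive standard representations in terms of the $S_{J',h'}$ with $|J'|=|J|$. Once this structural statement is established, both the termination at length $n-d$ and the rank formula \eqref{eq:resrank} follow from the purely combinatorial count above.
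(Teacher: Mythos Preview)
The paper does not actually prove this theorem: it is quoted verbatim from \cite[Thm.~6.1]{wms:comb2} and used as a black box, so there is no in-paper proof to compare against. Your sketch is, however, precisely the strategy used in the cited source: one shows that the syzygies $S_{j,h}$ coming from the involutive standard representations of the non-multiplicative products again form a Pommaret basis (now of $\mathrm{Syz}_{1}(\H)$) with respect to the induced Schreyer order, with $\lt{S_{j,h}}=x_{j}e_{h}$ and hence $\cls{S_{j,h}}=j$, and then iterates. Your combinatorial count of the ranks via the index sets $J\subseteq\{\cls{h}+1,\dots,n\}$ with $|J|=i$ is correct and is exactly how the formula \eqref{eq:resrank} arises there. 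The only point where your outline is genuinely schematic is the one you flag yourself: verifying that the $S_{j,h}$ satisfy the analogue of Proposition~\ref{prop:pbcrit} in the free module, i.e.\ that each $x_{\ell}S_{j,h}$ with $\ell>j$ has an involutive standard representation in terms of the $S_{j',h'}$. In \cite{wms:comb2} this is handled by a direct computation exploiting the commutativity $x_{\ell}(x_{j}h)=x_{j}(x_{\ell}h)$ and the uniqueness of involutive standard representations; once that lemma is in place, the rest of your argument goes through as written.
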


We denote by $\reg{\I}$ the \emph{Castelnuovo-Mumford regularity} of $\I$
(considered as a graded module) and by $\pd{\I}$ its \emph{projective
  dimension}.  The \emph{satiety} $\sat{\I}$ is the lowest degree from which
on the ideal $\I$ and its \emph{saturation} $\Isat=\I:\mf^{\infty}$ coincide.
These objects can be easily read off from a Pommaret basis for
$\prec_{\mathrm{revlex}}$.

\begin{theorem}[{\cite[Thm.~8.11,Thm.~9.2, Prop.~10.1, Cor.~10.2]{wms:comb2}}]
  \label{thm:regsatpom}
  Let $\H$ be a Pommaret basis of the ideal $\I\lhd\P$ for the order
  $\prec_{\mathrm{revlex}}$.  We denote by $\H_{1}=\{h\in\H\mid\cls{h}=1\}$
  the subset of generators of class $1$. 
  \begin{description}
  \item[(i)] $\reg{\I}=\deg{\H}$.
  \item[(ii)] $\pd{\I}=n-\cls{\H}$.
  \item[(iii)] Let $\tilde\H_{1}=\{h/x_{1}^{\deg_{x_{1}}{\lt{h}}}\mid
    h\in\H_{1}\}$. Then the set $\bar\H=(\H\setminus\H_{1})\cup\tilde\H_{1}$
    is a weak Pommaret basis\footnote{Thus elimination of redundant generators
      yields a Pommaret basis \cite[Prop.~5.7]{wms:comb1}.} of the saturation
    $\Isat$.  Thus $\Isat=\I:x_{1}^{\infty}$ and the ideal
    $\I$ is saturated, if and only if $\H_{1}=\emptyset$.
  \item[(iv)] $\sat{\I}=\deg{\H_{1}}$.
  \end{description}
\end{theorem}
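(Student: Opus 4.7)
\smallskip

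\noindent\textbf{Proof plan.} I would treat the four items in the order (ii), (i), (iii), (iv), since each later part leans on the preceding ones.

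For (ii), I would combine Theorem \ref{thm:pomres} with the Auslander--Buchsbaum formula. The Pommaret free resolution \eqref{eq:pomres} exhibits $\pd{\I}\leq n-\cls{\H}$. For the reverse inequality, Theorem \ref{thm:pbprop}(iii) gives $\depth{(\P/\I)}=d-1$ with $d=\cls{\H}$, so Auslander--Buchsbaum yields $\pd{(\P/\I)}=n-d+1$, hence $\pd{\I}=n-d$ exactly. This also certifies that the resolution of Theorem \ref{thm:pomres} has minimal length.

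For (i), I would exploit the same resolution to bound $\reg{\I}$ from above. The free module $\P^{r_{i}}$ in \eqref{eq:pomres} is assembled from the non-multiplicative products $x_{j_{1}}\cdots x_{j_{i}}h$, so its generators sit in degrees at most $\deg{\H}+i$. Reading off the shifts one gets $\reg{\I}\leq\deg{\H}$. For the reverse inequality I would argue that a Pommaret basis element $h\in\H$ with $\deg{h}=\deg{\H}$ survives in a minimal set of generators of $\I$: any involutive reduction inside $\H$ would contradict the uniqueness of involutive divisors from Definition \ref{def:pombas} (together with Proposition \ref{prop:pbisr}), so $\beta_{0,\deg{\H}}(\I)\neq0$ and $\reg{\I}\geq\deg{\H}$.

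Part (iii) is, as I expect, the main obstacle. The goal is to prove simultaneously that $\Isat=\I:x_{1}^{\infty}$ and that dividing out the $x_{1}$-powers on $\H_{1}$ yields a weak Pommaret basis of this colon ideal. The key observation, tailored to our non-standard $\prec_{\mathrm{revlex}}$, is that a term $x^{\mu}$ belongs to $\lt{\I}:\mf^{\infty}$ if and only if some power $x_{1}^{k}x^{\mu}$ lies in $\lt{\I}$; this relies on the fact that under $\prec_{\mathrm{revlex}}$ the class of the leading term is invariant under multiplication by $x_{1}$, which lets one transfer a saturation with respect to all variables into one with respect to $x_{1}$ alone. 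Given this, I would take $\bar\H=(\H\setminus\H_{1})\cup\tilde\H_{1}$, show that no leading term in $\lt{\bar\H}$ involutively divides another (the division by $x_{1}^{\deg_{x_{1}}\lt{h}}$ raises the class above $1$, so the classes change but no new divisibility is introduced), and then verify the criterion of Proposition \ref{prop:pbcrit}: for every $\bar h\in\bar\H$ and every non-multiplicative index $j>\cls{\bar h}$, the product $x_{j}\bar h$ admits an involutive standard representation with respect to $\bar\H$. This last step I would obtain by lifting the known involutive standard representation of $x_{j}h$ (for the corresponding $h\in\H$) and dividing through by the appropriate $x_{1}$-power. Finally, the ideal is saturated iff the colon with $x_{1}$ is trivial, which by Theorem \ref{thm:pbprop}(i) (applied to the strong independence of $\{x_{1}\}$ after passing to $\Isat$) is equivalent to $\H_{1}=\emptyset$.

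For (iv), having established (iii), the two (weak) Pommaret bases $\H$ and $\bar\H$ agree in degrees above $\deg{\H_{1}}$, since only the class-$1$ generators have been modified and each such modification lowers the degree by $\deg_{x_{1}}\lt{h}\leq\deg{h}\leq\deg{\H_{1}}$. Consequently $\I_{s}=\Isat_{s}$ for $s\geq\deg{\H_{1}}$, giving $\sat{\I}\leq\deg{\H_{1}}$. Conversely, choosing $h\in\H_{1}$ with $\deg{h}=\deg{\H_{1}}$, the corresponding element $\tilde h\in\tilde\H_{1}$ shows that $\Isat$ contains a term of strictly smaller degree in the same equivalence class, so $\I_{\deg{\H_{1}}-1}\subsetneq\Isat_{\deg{\H_{1}}-1}$ and $\sat{\I}\geq\deg{\H_{1}}$.
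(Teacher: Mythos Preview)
The paper does not actually prove this theorem; it is imported from \cite{wms:comb2} with the citations given in the heading, so there is no in-paper argument to compare against.  I will therefore only assess the correctness of your outline.

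Your treatment of (ii) via the resolution \eqref{eq:pomres}, Theorem~\ref{thm:pbprop}(iii) and Auslander--Buchsbaum is fine.

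In (i) the upper bound $\reg{\I}\leq\deg{\H}$ from the resolution is correct, but the lower bound argument is wrong.  It is \emph{not} true that a Pommaret generator of maximal degree must be a minimal generator of $\I$.  Take $\I=\langle y^{2},z^{2}\rangle\lhd\kk[x,y,z]$.  Here the Pommaret basis is $\H=\{z^{2},\,y^{2},\,y^{2}z\}$ (the last element is forced because $z\cdot y^{2}$ has no involutive divisor among the first two), so $\deg{\H}=3$, yet $y^{2}z$ is clearly redundant and $\beta_{0,3}(\I)=0$.  The regularity is indeed $3$, but it is witnessed at homological position~$1$ (by $\beta_{1,4}$), not at position~$0$.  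A correct lower bound has to locate a minimal syzygy of the appropriate degree further out in the resolution, or proceed through a different characterisation of regularity (local cohomology, or $\mf$-regularity of truncations); the ``survival'' argument you give does not work.

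In (iv) the upper-bound step is also flawed.  That $\H$ and $\bar\H$ agree as \emph{sets} in degrees above $\deg{\H_{1}}$ does not imply $\I_{s}=\Isat_{s}$ for $s\geq\deg{\H_{1}}$: the degree-$s$ component of an ideal is assembled from \emph{all} basis elements of degree $\leq s$, not only those of degree exactly $s$.  One really has to show that for $f\in\Isat$ with $\deg f\geq\deg\H_{1}$ the involutive standard representation of a suitable $x_{1}^{k}f\in\I$ can be divided through by $x_{1}^{k}$; this is an argument in the spirit of the proof of Proposition~\ref{prop:qi}, not a set comparison.  Your lower-bound idea is basically right but should use $h/x_{1}$ (of degree $\deg\H_{1}-1$) rather than $\tilde h$, together with the observation that any involutive divisor of $\lt(h)/x_{1}$ in $\lt\H$ would also involutively divide $\lt h$.

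Your sketch of (iii) points in the right direction but leaves the crucial identity $\I:x_{1}^{\infty}=\I:\mf^{\infty}$ unproved.  The monomial identity $\lt\I:x_{1}^{\infty}=\lt\I:\mf^{\infty}$ is Theorem~\ref{thm:quastab}(iv), and passing from $\lt\I$ to $\I$ requires the Bayer--Stillman type fact $\lt(\I:x_{1}^{\infty})=\lt\I:x_{1}^{\infty}$ for $\prec_{\mathrm{revlex}}$; neither step is made explicit in your outline.
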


\begin{remark}\label{rem:extbetti}
  Bayer et al.~\cite{bcp:betti} call a non-vanishing Betti number $\beta_{ij}$
  \emph{extremal}, if $\beta_{k\ell}=0$ for all $k\geq i$ and $\ell>j$.  In
  \cite[Rem.~9.7]{wms:comb2} it is shown how the positions and the values of
  all extremal Betti numbers can be obtained from the Pommaret basis $\H$ for
  $\prec_{\mathrm{revlex}}$.  Let $h_{\gamma_{1}}\in\H$ be of minimal class
  among all generators of maximal degree in $\H$ and set
  $i_{1}=n-\cls{h_{\gamma_{1}}}$ and $q_{1}=\deg{h_{\gamma_{1}}}$.  Then
  $\beta_{i_{1},q_{1}+i_{1}}$ is an extremal Betti number and its value is
  given by the number of generators of degree $q_{1}$ and class $n-i_{1}$.  If
  $\cls{h_{\gamma_{1}}}=\depth{\I}$, it is the only one.  Otherwise let
  $h_{\gamma_{2}}$ be of minimal class among all generators of maximal degree
  in $\{h\in\H\mid\cls{h}<\cls{h_{\gamma_{1}}}\}$.  Defining $i_{2}$, $q_{2}$
  analogous to above, $\beta_{i_{2},q_{2}+i_{2}}$ is a further extremal Betti
  number and its value is given by the number of generators of degree $q_{2}$
  and class $n-i_{2}$ and so on.
\end{remark}

\section{$\delta$-Regularity and Quasi-Stable Ideals}\label{sec:dreg}

Not every ideal $\I\lhd\P$ possesses a finite Pommaret basis.  One can show
that this is solely a problem of the chosen variables~$\X$; after a suitable
linear change of variables $\tilde\X=A\X$ with a non-singular matrix
$A\in\kk^{n\times n}$ the transformed ideal
$\tilde\I\lhd\tilde\P=\kk[\tilde\X]$ has a finite Pommaret basis (for the same
term order which we consider as being defined on exponent vectors)
\cite[Sect.~2]{wms:comb2}.

\begin{definition}\label{def:dreg}
  The variables $\X$ are \emph{$\delta$-regular} for $\I\lhd\P$ and the
  order~$\prec$, if $\I$ has a finite Pommaret basis for $\prec$.
\end{definition}

In \cite[Sect.~2]{wms:comb2} a method is presented to detect effectively
whether given variables are $\delta$-singular and, if this is the case, to
produce deterministically $\delta$-regular variables.  Furthermore, it is
proven there that generic variables are $\delta$-regular so that one can also
employ probabilistic approaches although these are usually computationally
disadvantageous.

It seems to be rather unknown that Serre implicitly presented already in 1964
a version of $\delta$-regularity.  In a letter appended to \cite{gs:alg}, he
introduced the notion of a \emph{quasi-regular} sequence and related it to
Koszul homology.\footnote{Quasi-regular sequences were rediscovered by
  Schenzel et al.~\cite{stc:vcm} under the name \emph{filter-regular}
  sequences and by Aramova and Herzog \cite{ah:almreg} as \emph{almost
    regular} sequences.}  Let $\V$ be a finite-dimensional vector space, $S\V$
the symmetric algebra over $\V$ and $\M$ a finitely generated graded
$S\V$-module.  A vector $v\in\V$ is called quasi-regular at degree $q$ for
$\M$, if $vm=0$ for an $m\in\M$ implies $m\in\M_{<q}$.  A sequence
$(v_{1},\dots,v_{k})$ of vectors $v_{i}\in\V$ is quasi-regular at degree $q$
for $\M$, if each $v_{i}$ is quasi-regular at degree $q$ for $\M/\langle
v_{1},\dots,v_{i-1}\rangle\M$.

Given a basis $\X$ of $\V$, we can identify $S\V$ with the polynomial ring
$\P=\kk[\X]$.  Then it is shown in \cite[Thm.~5.4]{wms:delta} that the
variables $\X$ are $\delta$-regular for a homogeneous ideal $\I\lhd\P$ and the
reverse lexicographic order, if and only if they form a quasi-regular sequence
for the module $\P/\I$ at degree $\reg{\I}$.

Our first result describes the degrees appearing in the Pommaret basis for the
reverse lexicographic order in an intrinsic manner and generalises
\cite[Lemma~2.3]{nvt:reduct} where only Borel-fixed monomial ideals for
$\ch{\kk}=0$ are considered.

\begin{proposition}\label{prop:qi}
  Let the variables $\X$ be $\delta$-regular for the ideal $\I$ and the
  reverse lexicographic order.  If\/ $\H$ denotes the corresponding Pommaret
  basis and $\H_{i}\subseteq\H$ the subset of generators of class $i$, then
  the integer
  \begin{equation}\label{eq:qi}
    q_{i}=
    \max{\bigl\{\,q\in\NN_{0}\mid
                       (\langle\I,x_{1},\dots,x_{i-1}\rangle:x_{i})_{q}\neq
                       \langle\I,x_{1},\dots,x_{i-1}\rangle_{q}\,\bigl\}}
  \end{equation}
  satisfies $q_{i}=\deg{\H_{i}}-1$ (with the convention that
  $\deg{\emptyset}=\max{\emptyset}=-\infty$).
\end{proposition}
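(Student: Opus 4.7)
The plan is to reduce the statement to Theorem~\ref{thm:regsatpom} applied inside the quotient ring $\bar{\P}=\kk[x_{i},\ldots,x_{n}]\cong\P/\langle x_{1},\ldots,x_{i-1}\rangle$, where $x_{i}$ plays the role of the class-$1$ variable. Writing $\J_{i}=\langle\I,x_{1},\ldots,x_{i-1}\rangle$, $\bar{\J}_{i}$ for its image in $\bar{\P}$, and $\bar{f}$ for the image of any $f\in\P$, the natural isomorphism $\P/\J_{i}\cong\bar{\P}/\bar{\J}_{i}$ shows that $q_{i}$ is equally computed by the analogous formula applied to $\bar{\J}_{i}$ in $\bar{\P}$; hence it suffices to work in $\bar{\P}$ throughout.

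The main step (and principal obstacle) is to establish that $\bar{\H}_{\geq i}:=\{\,\bar{h}\mid h\in\H,\ \cls{h}\geq i\,\}$ is a Pommaret basis of $\bar{\J}_{i}$ in $\bar{\P}$ for the inherited reverse lexicographic order. A short revlex argument shows $\bar{h}=0$ whenever $\cls{h}<i$: $\lt{h}$ is divisible by $x_{\cls{h}}\in\{x_{1},\ldots,x_{i-1}\}$ and vanishes modulo $\langle x_{1},\ldots,x_{i-1}\rangle$, and any other term $x^{\nu}$ of $h$ must have a non-zero entry in positions $1,\ldots,i-1$---otherwise the first non-vanishing entry of $\nu-\lt{h}$ would be negative at position $\cls{h}$, contradicting $x^{\nu}\preceq\lt{h}$. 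For $\cls{h}\geq i$ the leading term survives and equals $\lt{\bar{h}}$. The non-divisibility hypothesis of Proposition~\ref{prop:pbcrit} is inherited from $\H$, since involutive divisibility in $\bar{\P}$ (with multiplicative subring $\kk[x_{i},\ldots,x_{\cls{h'}}]$) is a restriction of involutive divisibility in $\P$. For each non-multiplicative index $k>\cls{h}$, reducing the involutive standard representation $x_{k}h=\sum_{h'\in\H}P_{h'}h'$ of Proposition~\ref{prop:pbisr} modulo $\langle x_{1},\ldots,x_{i-1}\rangle$ gives $x_{k}\bar{h}=\sum_{\cls{h'}\geq i}\overline{P_{h'}}\,\bar{h}'$; each $\overline{P_{h'}}$ lies in $\kk[x_{i},\ldots,x_{\cls{h'}}]$ (the multiplicative variables in $\bar{\P}$), and the required leading-term bound follows from $\lt{(\overline{P_{h'}}\bar{h}')}\preceq\lt{(P_{h'}h')}\preceq\lt{(x_{k}h)}=\lt{(x_{k}\bar{h})}$, using that $\lt{(x_{k}h)}$ involves none of $x_{1},\ldots,x_{i-1}$.

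With $\bar{\H}_{\geq i}$ identified as a Pommaret basis of $\bar{\J}_{i}$, its class-$1$ generators in $\bar{\P}$ are precisely $\{\,\bar{h}\mid h\in\H_{i}\,\}$, of degrees $\deg{\H_{i}}$. Theorem~\ref{thm:regsatpom}(iii)--(iv) applied in $\bar{\P}$ then yields $\bar{\J}_{i}^{\mathrm{sat}}=\bar{\J}_{i}:x_{i}^{\infty}$ and $\sat{\bar{\J}_{i}}=\deg{\H_{i}}$. It remains to identify $q_{i}$ with $\sat{\bar{\J}_{i}}-1$: the inclusion $\bar{\J}_{i}:x_{i}\subseteq\bar{\J}_{i}^{\mathrm{sat}}$ gives $q_{i}\leq\sat{\bar{\J}_{i}}-1$, while any $f\in(\bar{\J}_{i}^{\mathrm{sat}})_{\sat{\bar{\J}_{i}}-1}\setminus(\bar{\J}_{i})_{\sat{\bar{\J}_{i}}-1}$ (existing by definition of satiety) satisfies $fx_{i}\in(\bar{\J}_{i}^{\mathrm{sat}})_{\sat{\bar{\J}_{i}}}=(\bar{\J}_{i})_{\sat{\bar{\J}_{i}}}$ and thus witnesses $q_{i}\geq\sat{\bar{\J}_{i}}-1$. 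When $\H_{i}=\emptyset$ the same theorem renders $\bar{\J}_{i}$ saturated in $\bar{\P}$, whence $\bar{\J}_{i}:x_{i}=\bar{\J}_{i}$ in every degree and $q_{i}=-\infty=\deg{\H_{i}}-1$ under the stated convention.
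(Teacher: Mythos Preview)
Your argument is correct. Both you and the paper carry out the same reduction to $\bar\P=\kk[x_{i},\dots,x_{n}]$ and both rely on the fact that $\{\bar h\mid h\in\H,\ \cls h\geq i\}$ is a Pommaret basis of the image ideal $\bar\J_{i}$; the paper simply cites this from \cite[Lemma~3.1]{wms:noether}, whereas you re-derive it via Proposition~\ref{prop:pbcrit}. The genuine difference lies in the endgame. The paper argues directly with involutive standard representations: it exhibits a witness $\tilde h_{\max}/x_{i}\in(\bar\J_{i}:x_{i})\setminus\bar\J_{i}$ in degree $\deg\H_{i}-1$, and then shows that for any $\tilde f$ of higher degree with $x_{i}\tilde f\in\bar\J_{i}$ every coefficient in the involutive standard representation of $x_{i}\tilde f$ lies in $\langle x_{i}\rangle$, so one may divide through. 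You instead recognise $q_{i}$ as $\sat{\bar\J_{i}}-1$ and appeal to Theorem~\ref{thm:regsatpom}(iii)--(iv). Your route is slightly more conceptual and makes the connection to satiety explicit; the paper's route is more self-contained and avoids the small lemma that $q_{i}=\sat{\bar\J_{i}}-1$ (which, while easy, still requires the two-line argument you give). Either way the computational content is essentially the same, since Theorem~\ref{thm:regsatpom}(iv) is itself proved by exactly the kind of involutive-representation analysis the paper performs here.
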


\begin{proof}
  Set $\tilde\P=\kk[x_{i},\dots,x_{n}]$ and
  $\tilde\I=\I|_{x_{1}=\cdots=x_{i-1}=0}\lhd\tilde\P$.  Then it is easy to see
  that $q_{i}=\max{\{q\mid(\tilde\I:x_{i})_{q}\neq\tilde\I_{q}\}}$.
  Furthermore, the variables $x_{i},\dots,x_{n}$ are $\delta$-regular for
  $\tilde\I$ and the reverse lexicographic order---the Pommaret basis of
  $\tilde\I$ is given by $\tilde\H=\bigcup_{k\geq i}\tilde\H_{k}$ with
  $\tilde\H_{k}=\H_{k}|_{x_{1}=\cdots=x_{i-1}=0}$
  (cf.~\cite[Lemma~3.1]{wms:noether}).

  Assume first that $\tilde\H_{i}=\emptyset$.  In this case
  $x_{i}f\in\tilde\I$ implies $f\in\tilde\I$, as one can immediately see from
  the involutive standard representation of $x_{i}f$ with respect to
  $\tilde\H$ (all coefficients must lie in $\langle x_{i}\rangle$).  If
  $\tilde\H_{i}\neq\emptyset$, then we choose a generator $\tilde
  h_{\mathrm{max}}\in\tilde\H_{i}$ of maximal degree.  By the properties of
  $\prec_{\mathrm{revlex}}$, we find $\tilde h_{\mathrm{max}}\in\langle
  x_{i}\rangle$ and hence may write $\tilde h_{\mathrm{max}}=x_{i}\tilde g$.
  By definition of a Pommaret basis, $\tilde g\notin\tilde\I$ and thus
  $q_{i}\geq\deg{\tilde g}=\deg{\H_{i}}-1$.

  Assume now that $q_{i}>\deg{\H_{i}}-1$.  Then there exists a polynomial
  $\tilde f\in\tilde\P\setminus\tilde\I$ with $\deg{\tilde f}=q_{i}$ and
  $x_{i}\tilde f\in\tilde\I$.  Consider the involutive standard representation
  $x_{i}\tilde f=\sum_{\tilde h\in\tilde\H}P_{\tilde h}\tilde h$ with respect
  to $\tilde\H$.  If $\cls{\tilde h}>i$, then we must have $P_{\tilde
    h}\in\langle x_{i}\rangle$.  If $\cls{\tilde h}=i$, then by definition
  $P_{\tilde h}\in\kk[x_{i}]$.  Since $\deg{(x_{i}\tilde
    f)}>\deg{\tilde\H_{i}}$, any non-vanishing coefficient $P_{\tilde h}$ must
  be of positive degree in this case.  Thus we can conclude that all
  non-vanishing coefficients $P_{\tilde h}$ lie in $\langle x_{i}\rangle$.
  But then we may divide the involutive standard representation of
  $x_{i}\tilde f$ by $x_{i}$ and obtain an involutive standard representation
  of $\tilde f$ itself so that $\tilde f\in\tilde\I$ in contradiction to the
  assumptions we made.\qed
\end{proof}

Consider the following invariants related to the local cohomology of $\P/\I$
(with respect to the maximal graded ideal $\frak{m}=\langle
x_{1},\dots,x_{n}\rangle$):
\begin{displaymath}
  \begin{aligned}
    a_{i}(\P/\I)&=\max{\{q\mid H^{i}_{\frak{m}}(\P/\I)_{q}\neq0\}}\,,\qquad 
        &0\leq i\leq\dim{(\P/\I)}\,,\\
    \reg_{t}{(\P/\I)}&=\max{\{a_{i}(\P/\I)+i\mid0\leq i\leq t\}}\,,\qquad
        &0\leq t\leq\dim{(\P/\I)}\,,\\
    a^{*}_{t}(\P/\I)&=\max{\{a_{i}(\P/\I)\mid0\leq i\leq t\}}\,,\qquad
        &0\leq t\leq\dim{(\P/\I)}\,.
  \end{aligned}
\end{displaymath}
Trung \cite[Thm.~2.4]{nvt:reduct} related them for monomial Borel-fixed ideals
and $\ch{\kk}=0$ to the degrees of the minimal generators.  We can now
generalise this result to arbitrary homogeneous polynomial ideals.

\begin{corollary}\label{cor:regat}
  Let the variables $\X$ be $\delta$-regular for the ideal $\I\lhd\P$ and the
  reverse lexicographic order.  Denote again by $\H_{i}$ the subset of the
  Pommaret basis $\H$ of $\I$ consisting of the generators of class $i$ and
  set $q_{i}=\deg{\H_{i}}-1$.  Then
  \begin{displaymath}
    \begin{aligned}
      \reg_{t}{(\P/\I)}&=\max{\{q_{1},q_{2},\dots,q_{t+1}\}}\,,\qquad
          &0\leq t\leq\dim{(\P/\I)}\,,\\
      a^{*}_{t}(\P/\I)&=\max{\{q_{1},q_{2}-1,\dots,q_{t+1}-t\}}\,,\qquad
          &0\leq t\leq\dim{(\P/\I)}\,.
    \end{aligned}
  \end{displaymath}
\end{corollary}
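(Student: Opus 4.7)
The plan is to bridge the combinatorial numbers $q_{i}$ of Proposition \ref{prop:qi} to the local cohomology invariants by exploiting the fact, noted just above Proposition \ref{prop:qi}, that $\delta$-regularity of $\X$ for $\prec_{\mathrm{revlex}}$ is equivalent (by \cite[Thm.~5.4]{wms:delta}) to $x_{1},\dots,x_{n}$ forming a quasi-regular, hence filter-regular, sequence on $\P/\I$. Granting this, the corollary should reduce to the same manipulation of long exact sequences in local cohomology that Trung uses in \cite[Thm.~2.4]{nvt:reduct}; the Borel-fixed hypothesis there enters only to supply filter-regularity of the variables, and in our setting $\delta$-regularity plays exactly the same role.

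As the first concrete step I would translate the $q_{i}$ into local cohomology data. Writing $M_{j}=\P/\langle\I,x_{1},\dots,x_{j}\rangle$, Proposition \ref{prop:qi} identifies $q_{j+1}$ as the top degree of the graded submodule $(0:_{M_{j}}x_{j+1})\subseteq M_{j}$. Since $x_{j+1}$ is filter-regular on $M_{j}$, this submodule has finite length and sits inside $H^{0}_{\mf}(M_{j})$. The two top degrees in fact coincide: any non-zero homogeneous element $m\in H^{0}_{\mf}(M_{j})$ in its top degree $q$ satisfies $x_{j+1}m\in H^{0}_{\mf}(M_{j})_{q+1}=0$, hence lies in $(0:_{M_{j}}x_{j+1})$. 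This yields the key identification $q_{j+1}=a_{0}(M_{j})$.

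Next I would invoke the standard formulas
$$\reg_{t}(M)=\max_{0\leq j\leq t}a_{0}(M_{j}),\qquad a^{*}_{t}(M)=\max_{0\leq j\leq t}\bigl\{a_{0}(M_{j})-j\bigr\}$$
valid for a filter-regular sequence $x_{1},\dots,x_{t+1}$ on $M=\P/\I$. These identities are proved by induction on $j$ from the short exact sequence $0\to\bar{M}_{j}(-1)\to\bar{M}_{j}\to\bar{M}_{j}/x_{j+1}\bar{M}_{j}\to0$, where $\bar{M}_{j}=M_{j}/H^{0}_{\mf}(M_{j})$ and multiplication by $x_{j+1}$ is a non-zero-divisor on $\bar{M}_{j}$; the associated long exact sequence in $H^{*}_{\mf}(-)$ gives degree-shift relations of the form $a_{i}(M_{j})+1\leq a_{i-1}(M_{j+1})$ together with matching lower bounds, and the maxima assemble exactly as above. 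Substituting $q_{j+1}=a_{0}(M_{j})$ and re-indexing then produces the two formulas of the corollary.

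The main obstacle is the filter-regular formula just displayed: a rigorous verification requires tracking degree shifts carefully through the long exact sequence and establishing both the upper and lower bounds on the $a_{i}(M_{j})$. However, this manipulation is classical---it is implicit in Serre's letter mentioned in the discussion before Proposition \ref{prop:qi} and appears in essentially this form in Trung's argument as well as in the subsequent filter-regular/almost-regular literature (e.g.~\cite{stc:vcm,ah:almreg})---so in practice I would cite it rather than re-derive it. The only genuinely new ingredient is then the bridge $q_{j+1}=a_{0}(M_{j})$ coming from Proposition \ref{prop:qi}.
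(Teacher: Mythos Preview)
Your approach is essentially the same as the paper's: both reduce the corollary to Proposition~\ref{prop:qi} combined with Trung's filter-regular formula for $\reg_{t}$ and $a^{*}_{t}$, the bridge being that $\delta$-regularity for $\prec_{\mathrm{revlex}}$ furnishes exactly the filter-regular (quasi-regular) sequence that Trung's hypotheses require. The paper is even terser---it simply cites \cite[Thm.~1.1]{nvt:reduct} (not Thm.~2.4, which is the Borel-fixed specialisation) together with Proposition~\ref{prop:qi}---whereas you additionally unpack the identification $q_{j+1}=a_{0}(M_{j})$ and sketch the long-exact-sequence induction; but this extra detail is precisely the content of the cited result, so there is no substantive divergence.
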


\begin{proof}
  This follows immediately from \cite[Thm.~1.1]{nvt:reduct} and
  Proposition~\ref{prop:qi}.\qed 
\end{proof}

For monomial ideals it is in general useless to transform to $\delta$-regular
variables, as the transformed ideal is no longer monomial.  Hence it is a
special property of a monomial ideal to possess a finite Pommaret basis: such
an ideal is called \emph{quasi-stable}.  The following theorem provides
several purely algebraic characterisations of quasi-stability independent of
Pommaret bases.  It combines ideas and results from \cite[Def.~1.5]{bs:mreg},
\cite[Prop.~3.2/3.6]{bg:scmr}, \cite[Prop.~2.2]{hpv:ext} and
\cite[Prop.~4.4]{wms:comb2}.

\begin{theorem}\label{thm:quastab}
  Let $\I\lhd\P$ be a monomial ideal and $D=\dim{(\P/\I)}$.  Then the
  following statements are equivalent.
  \begin{description}
  \item[(i)] $\I$ is quasi-stable.
  \item[(ii)] The variable $x_1$ is not a zero divisor for $\P/\Isat$ and for
    all $1\leq k<D$ the variable $x_{k+1}$ is not a zero divisor for
    $\P/\langle\I,x_1,\dots,x_k\rangle^{\mathrm{sat}}$.
  \item[(iii)] We have $\I:x_1^\infty\subseteq\I:x_2^\infty\subseteq\cdots
    \subseteq\I:x_D^\infty$ and for all $D<k\leq n$ an exponent $e_k\geq1$
    exists such that $x_k^{e_k}\in\I$.
  \item[(iv)] For all $1\leq k\leq n$ the equality $\I:x_k^\infty=\I:\langle
    x_k,\dots,x_n\rangle^\infty$ holds.
  \item[(v)] For every associated prime ideal\/
    $\mathfrak{p}\in\mathrm{Ass}(\P/\I)$ an integer $1\leq j\leq n$ exists
    such that $\mathfrak{p}=\langle x_{j},\dots,x_{n}\rangle$.
  \item[(vi)] If $x^\mu\in\I$ and $\mu_i>0$ for some $1\leq i<n$, then for
    each $0<r\leq\mu_i$ and $i<j\leq n$ an integer $s\geq0$ exists such that\/
    $x_{j}^{s}x^{\mu}/x_{i}^{r}\in\I$.
  \end{description}
\end{theorem}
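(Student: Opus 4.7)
My plan is to prove the six conditions equivalent by a single cycle $(i) \Rightarrow (vi) \Rightarrow (iii) \Rightarrow (iv) \Rightarrow (v) \Rightarrow (ii) \Rightarrow (i)$. The combinatorial condition $(vi)$ and the associated-prime condition $(v)$ are natural bridges between the Pommaret-basis characterisation $(i)$ and the saturation/regularity characterisations $(ii)$--$(iv)$; routing the cycle through them keeps each step short, except the closing one.

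For $(i) \Rightarrow (vi)$, I take $x^{\mu}\in\I$ with $\mu_{i}>0$ and its unique involutive divisor $x^{\alpha}\in\H$, so $x^{\mu-\alpha}\in\kk[x_{1},\dots,x_{\cls(\alpha)}]$. If $\cls(\alpha)>i$ then $x_{i}^{r}$ can be cancelled inside the multiplicative factor and $x^{\mu}/x_{i}^{r}\in\I$ with $s=0$; if $\cls(\alpha)\leq i$ I use Proposition~\ref{prop:pbcrit} together with the finiteness of $\H$ to trace the non-multiplicative prolongation by $x_{j}$ through a finite chain until $x_{j}^{s}x^{\mu}/x_{i}^{r}$ lands on an involutive divisor in $\H$. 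For $(vi) \Rightarrow (iii)$, a witness $x_{i}^{N}x^{\mu}\in\I$ for $x^{\mu}\in\I:x_{i}^{\infty}$ is converted to $x_{i+1}^{M}x^{\mu}\in\I$ by applying $(vi)$ once per factor of $x_{i}$; iterating in $i$ gives the chain. The pure-power claim for $k>D$ is forced by the observation that otherwise $\kk[x_{1},\dots,x_{k}]/(\I\cap\kk[x_{1},\dots,x_{k}])$ would be infinite-dimensional as a $\kk$-space, contradicting $\dim{\P/\I}=D$.

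The block $(iii) \Rightarrow (iv) \Rightarrow (v) \Rightarrow (ii)$ is pure commutative algebra. Step $(iii) \Rightarrow (iv)$ is immediate: the ascending chain yields $\I:x_{k}^{\infty}=\I:x_{\ell}^{\infty}$ for $k\leq\ell\leq D$, and for $\ell>D$ some $x_{\ell}^{e_{\ell}}\in\I$ puts $x_{\ell}$ into $\sqrt{\I:x_{k}^{\infty}}$. Step $(iv) \Rightarrow (v)$ uses that every associated prime of a monomial ideal has the form $\mathfrak{p}=\langle S\rangle$ for some $S\subseteq\X$, and equals $(\I:x^{\nu})$ for a suitable monomial $x^{\nu}$; if $x_{i}\in\mathfrak{p}$ but $x_{j}\notin\mathfrak{p}$ for some $j>i$, then $x^{\nu}\in\I:x_{i}^{\infty}$ forces $x_{j}^{M}x^{\nu}\in\I$ by $(iv)$, so $x_{j}\in\sqrt{\I:x^{\nu}}=\mathfrak{p}$, a contradiction, and $S$ must be a tail $\{x_{j},\dots,x_{n}\}$. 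Step $(v) \Rightarrow (ii)$: since $\mathrm{Ass}(\P/\Isat)=\mathrm{Ass}(\P/\I)\setminus\{\mf\}$, no associated prime of $\P/\Isat$ contains $x_{1}$ by $(v)$, so $x_{1}$ is regular on $\P/\Isat$; for the inductive step I apply the same reasoning to the monomial ideal $\langle\I,x_{1},\dots,x_{k}\rangle$, whose associated primes are again tails of $\X$ (they arise as sums of tail primes from $(v)$ with $\langle x_{1},\dots,x_{k}\rangle$, which remain tails), and read off regularity of $x_{k+1}$.

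The closing implication $(ii) \Rightarrow (i)$ is where the main work lies. My plan is to invoke the reformulation of $\delta$-regularity via quasi-regular sequences recalled just before Proposition~\ref{prop:qi}: the saturation-regularity hypothesis $(ii)$ essentially asserts that $x_{1},\dots,x_{n}$ form a quasi-regular sequence on $\P/\I$ at some sufficiently large degree $q$, and \cite{wms:delta} then produces a finite Pommaret basis. The nontrivial task is to manufacture such a uniform $q$ from $(ii)$, for which Noetherianity suffices: each saturation $\langle\I,x_{1},\dots,x_{k}\rangle^{\mathrm{sat}}$ stabilises in finite degree and the maximum of those degrees serves. The main obstacle is that $(ii)$ only supplies regularity up to $k<D$ while quasi-regularity involves all $n$ variables; this gap is closed by the pure-power part of $(iii)$ (already established inside the cycle), which shows that $x_{D+1},\dots,x_{n}$ act nilpotently on $\P/\langle\I,x_{1},\dots,x_{D}\rangle$ and are therefore automatically quasi-regular there.
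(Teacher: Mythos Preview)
The paper does not prove this theorem in the text; it merely assembles the statement from the literature (\cite{bs:mreg}, \cite{bg:scmr}, \cite{hpv:ext}, \cite{wms:comb2}). So there is no in-paper argument to compare against, and your cycle has to stand on its own.

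The decisive flaw is in the closing step $(ii)\Rightarrow(i)$. You invoke ``the pure-power part of $(iii)$ (already established inside the cycle)'' to make $x_{D+1},\dots,x_{n}$ quasi-regular on $\P/\langle\I,x_{1},\dots,x_{D}\rangle$. But in your cycle $(iii)$ is deduced from $(vi)$, which in turn is deduced from $(i)$; hence $(iii)$ is \emph{not} available as a hypothesis while proving $(ii)\Rightarrow(i)$, and using it here is circular. What you need instead is to extract directly from $(ii)$ that $\P/\langle\I,x_{1},\dots,x_{D}\rangle$ is zero-dimensional: since each $x_{k+1}$ is regular on $\P/\langle\I,x_{1},\dots,x_{k}\rangle^{\mathrm{sat}}$, it avoids every minimal prime of $\langle\I,x_{1},\dots,x_{k}\rangle$, so the Krull dimension drops by one at each step and the quotient after $D$ steps is Artinian. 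Two smaller slips are worth noting as well. In $(v)\Rightarrow(ii)$ the primes $\langle x_{1},\dots,x_{k}\rangle+\langle x_{j},\dots,x_{n}\rangle$ are \emph{not} tails in $\P$ when $j>k+1$ (though $x_{k+1}$ still lies in none of them except $\mf$, so the conclusion survives). And in $(vi)\Rightarrow(iii)$, ``infinite-dimensional as a $\kk$-space'' is the wrong invariant to contradict $\dim\P/\I=D$; what you actually need is that $\I\cap\kk[x_{1},\dots,x_{k}]=0$ makes $\{x_{1},\dots,x_{k}\}$ an independent set modulo $\I$, forcing Krull dimension at least $k$.
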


The terminology ``quasi-stable'' stems from a result of Mall.  The minimality
assumption is essential here, as the simple example $\langle
x^{2},y^{2}\rangle\lhd\kk[x,y]$ shows.

\begin{lemma}[{\cite[Lemma~2.13]{mall:pom},
               \cite[Prop.~5.5.6]{wms:invol}}]\label{lem:stab}   
  A monomial ideal is stable,\footnote{In our ``reverse'' conventions, a
    monomial ideal $\I$ is called \emph{stable}, if for every term $t\in\I$
    and every index $k=\cls{t}<i\leq n$ also $x_{i}t/x_{k}\in\I$.} if and only
  if its minimal basis is a Pommaret basis.
\end{lemma}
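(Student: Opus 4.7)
The plan is to prove the two implications separately.

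\noindent\textbf{$(\Leftarrow)$} Suppose the minimal basis $\H$ is a Pommaret basis; I derive stability. Given $t \in \I$ with $k := \cls{t}$ and $i > k$, I want $x_it/x_k \in \I$. By the Pommaret property, $t$ has a unique involutive divisor $h = x^\mu \in \H$, giving $t = h \cdot x^\sigma$ with $x^\sigma \in \kk[x_1, \dots, x_{\cls{\mu}}]$, and hence $k \leq \cls{\mu}$. If $x_k \mid x^\sigma$, then $x^\sigma/x_k$ is still multiplicative for $h$, so $t/x_k \in \I$ and multiplying by $x_i$ finishes the case. Otherwise $\mu_k \geq 1$ forces $k = \cls{\mu}$, and I apply the Pommaret decomposition to the non-multiplicative product $x_ih$, obtaining $x_ih = h' \cdot x^{\sigma'}$ with $h' = x^{\mu'} \in \H$ and $x^{\sigma'} \in \kk[x_1, \dots, x_{\cls{\mu'}}]$. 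A short exponent analysis of the identity $\mu + e_i = \mu' + \sigma'$, together with the minimality of $\H$ (which forbids $h \mid h'$), forces $\cls{\mu'} \geq k$ and $\sigma'_k \geq 1$; division by $x_k$ then yields $x_ih/x_k = h'\cdot (x^{\sigma'}/x_k) \in \I$, and multiplying by $x^\sigma$ gives the claim.

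\noindent\textbf{$(\Rightarrow)$} Assume $\I$ is stable. I show that every monomial $t = x^\nu \in \I$ admits a unique involutive divisor in $\H$, which establishes the decomposition \eqref{eq:pombas}. The key is the following selection principle: among all $h = x^\mu \in \H$ dividing $t$, pick one whose class $\cls{\mu}$ is maximal, and, among those, whose exponent $\mu_{\cls{\mu}}$ is minimal. I claim this $h$ involutively divides $t$. Otherwise some $\ell > \cls{\mu}$ satisfies $\nu_\ell > \mu_\ell$, and stability applied to $h$ yields $x_\ell h/x_{\cls{\mu}} \in \I$, so a minimal generator $h' = x^{\mu'}$ divides $x_\ell h/x_{\cls{\mu}}$. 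Standard exponent bounds show $\cls{\mu'} \geq \cls{\mu}$ and $h' \mid t$. Now either $\cls{\mu'} > \cls{\mu}$, contradicting maximality, or $\cls{\mu'} = \cls{\mu}$ but $\mu'_{\cls{\mu}} \leq \mu_{\cls{\mu}} - 1$, contradicting minimality. Uniqueness of the involutive divisor follows from a componentwise comparison of the exponents of two candidate divisors, combined once more with the minimality of $\H$.

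\noindent\textbf{Main obstacle.} The substantive step is the two-tier selection of $h$ in the $(\Rightarrow)$ direction: maximality of $\cls{\mu}$ alone is not enough, since stability only guarantees a minimal generator $h'$ with $\cls{\mu'} \geq \cls{\mu}$, not strictly greater; the secondary minimization of $\mu_{\cls{\mu}}$ is precisely what ensures that, in the otherwise inconclusive case $\cls{\mu'} = \cls{\mu}$, the $\cls{\mu}$\,-exponent strictly drops, producing the desired contradiction. Once this is in place, both directions reduce to routine exponent bookkeeping of the type already familiar from the proofs in Section~\ref{sec:pombas}.
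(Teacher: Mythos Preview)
The paper does not actually supply its own proof of Lemma~\ref{lem:stab}: the result is quoted with references to \cite[Lemma~2.13]{mall:pom} and \cite[Prop.~5.5.6]{wms:invol}, and is then used as a black box (e.g.\ in the proof of Theorem~\ref{thm:stab}). So there is no in-paper argument to compare against.

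That said, your proof plan is correct and complete. In the $(\Leftarrow)$ direction the case split on whether $x_k\mid x^\sigma$ is the right move; in the second case your observation that $h'\neq h$ (since $x_i$ is non-multiplicative for $h$) combined with minimality rules out $h\mid h'$, and the exponent identity $\mu+e_i=\mu'+\sigma'$ then indeed forces $\sigma'_k\geq1$ in both subcases $\cls{\mu'}=k$ and $\cls{\mu'}>k$. In the $(\Rightarrow)$ direction you have identified the genuine subtlety: maximising $\cls{\mu}$ alone does not give a terminating argument, and your secondary minimisation of $\mu_{\cls{\mu}}$ is exactly what makes the contradiction work when stability only returns a divisor $h'$ of the same class. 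The uniqueness argument via componentwise comparison and minimality is standard and correct. Your self-contained argument is in the spirit of the proofs in the cited sources and would serve perfectly well here.
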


Thus already in the monomial case Pommaret bases are generally not minimal.
The following result of Mall characterises those polynomial ideals for which
the reduced Gr\"obner basis is simultaneously a Pommaret basis.  We provide
here a much simpler proof due to a more suitable definition of Pommaret bases.

\begin{theorem}[{\cite[Thm.~2.15]{mall:pom}}]\label{thm:stab}
  The reduced Gr\"obner basis of the ideal\/ $\I\lhd\P$ is simultaneously a
  Pommaret basis, if and only if\/ $\lt{\I}$ is stable.
\end{theorem}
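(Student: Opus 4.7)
The plan is to reduce both directions entirely to the monomial statement of Lemma~\ref{lem:stab}, exploiting the fact that, by Definition~\ref{def:pombas}, the polynomial notion of a Pommaret basis is defined purely in terms of the set of leading terms. The key preliminary observation is that if $\G$ is the reduced Gr\"obner basis of $\I$, then $\lt{\G}$ is precisely the minimal monomial basis of $\lt{\I}$: no leading term divides another (by the reduction conditions), and conversely each minimal generator of $\lt{\I}$ must be the leading term of some element of $\G$. In particular the elements of $\G$ automatically have pairwise distinct leading terms, so the ``distinct leading terms'' clause of Definition~\ref{def:pombas} is satisfied for free.

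For the ``if'' direction, assume that $\lt{\I}$ is stable. By Lemma~\ref{lem:stab}, the minimal basis of $\lt{\I}$ is a Pommaret basis of $\lt{\I}$. The preliminary observation identifies this minimal basis with $\lt{\G}$; hence $\lt{\G}$ is a Pommaret basis of $\lt{\I}$, and Definition~\ref{def:pombas} directly yields that $\G$ is a Pommaret basis of $\I$. Conversely, if $\G$ is a Pommaret basis of $\I$, then by Definition~\ref{def:pombas} the set $\lt{\G}$ is a Pommaret basis of $\lt{\I}$. Since $\G$ is reduced, $\lt{\G}$ is also the minimal basis of $\lt{\I}$, so Lemma~\ref{lem:stab} gives that $\lt{\I}$ is stable.

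There is no real obstacle here, which is precisely the simplification the authors emphasise: once the polynomial case is defined through its leading terms, the theorem collapses to the monomial lemma together with the elementary fact about the leading terms of a reduced Gr\"obner basis. The only point requiring a line of care is the equality $\lt{\G}=\{\text{minimal generators of }\lt{\I}\}$ for reduced~$\G$, which is standard.
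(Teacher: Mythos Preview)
Your argument is correct and is essentially the paper's own proof: the paper also observes that $\lt{\G}$ is the minimal basis of $\lt{\I}$ for a reduced Gr\"obner basis $\G$ and then invokes Lemma~\ref{lem:stab} together with Definition~\ref{def:pombas}. You have simply spelled out the two directions explicitly where the paper leaves them implicit.
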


\begin{proof}
  By definition, the leading terms $\lt{\G}$ of a reduced Gr\"obner basis $\G$
  form the minimal basis of $\lt{\I}$.  The assertion is now a trivial
  corollary to Lemma \ref{lem:stab} and the definition of a Pommaret
  basis.\qed
\end{proof}

\section{The Generic Initial Ideal}\label{sec:gin}

If we fix an order $\prec$ and perform a linear change of variables
$\tilde\X=A\X$ with a non-singular matrix $A\in\kk^{n\times n}$, then,
according to Galligo's Theorem \cite{gall:weier,bs:reverse}, for almost all
matrices $A$ the transformed ideal $\tilde\I\lhd\tilde\P=\kk[\tilde\X]$ has
the same leading ideal, the \emph{generic initial ideal} $\gin{\I}$ for the
used order.  By a further result of Galligo \cite{ag:divstab,bs:reverse},
$\gin{\I}$ is Borel fixed, i.\,e.\ invariant under the natural action of the
Borel group.  For $\ch{\kk}=0$, the Borel fixed ideals are precisely the
stable ones; in positive characteristics the property of being Borel fixed has
no longer such a simple combinatorial interpretation.

We will show in this section that many properties of the generic initial ideal
$\gin{\I}$ also hold for the ordinary leading ideal $\lt{\I}$---provided the
used variables are $\delta$-regular.  This observation has a number of
consequences.  While there does not exist an effective criterion for deciding
whether a given leading ideal is actually $\gin{\I}$, $\delta$-regularity is
simply proven by the existence of a finite Pommaret basis.  Furthermore,
$\gin{\I}$ can essentially be computed only by applying a random change of
variables which has many disadvantages from a computational point of view.  By
contrast, \cite[Sect.~2]{wms:comb2} presents a deterministic approach for the
construction of $\delta$-regular variables which in many case will lead to
fairly sparse transformations.  

>From a theoretical point of view, the following trivial lemma which already
appeared in \cite{bs:reverse,gall:weier} implies that proving a statement
about quasi-stable leading ideals immediately entails the analogous statement
about $\gin{\I}$.

\begin{lemma}\label{lem:genqs}
  The generic initial ideal $\gin{\I}$ is quasi-stable.
\end{lemma}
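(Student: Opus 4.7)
The plan is to exploit directly the definition of $\gin{\I}$ together with the fact stated in Section~\ref{sec:dreg} that generic variables are $\delta$-regular. By Galligo's theorem, there exists a Zariski open dense set $U_{1}\subseteq\mathrm{GL}_{n}(\kk)$ such that for every $A\in U_{1}$ the transformed ideal $\tilde\I=A\cdot\I$ has $\lt{\tilde\I}=\gin{\I}$. On the other hand, by \cite[Sect.~2]{wms:comb2} (cited in Section~\ref{sec:dreg}), the set $U_{2}$ of matrices $A$ such that the variables $\X$ are $\delta$-regular for $A\cdot\I$ is also Zariski open and non-empty. Intersecting, I pick an $A\in U_{1}\cap U_{2}$.

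For this choice, $\tilde\I$ possesses a finite Pommaret basis $\tilde\H$ in the original variables $\X$ with respect to $\prec_{\mathrm{revlex}}$. By Definition~\ref{def:pombas}, the leading terms $\lt{\tilde\H}$ then form a Pommaret basis of the monomial ideal $\lt{\tilde\I}=\gin{\I}$. Hence $\gin{\I}$ is a monomial ideal admitting a finite Pommaret basis, which is precisely the definition of quasi-stability given at the end of Section~\ref{sec:dreg}.

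There is no real obstacle here; as the authors signal by calling the lemma ``trivial'', the content is entirely packaged in two prior facts: (a) genericity of the coordinates giving $\gin{\I}$ as the leading ideal, and (b) genericity of $\delta$-regular coordinates. The only thing one must observe is that both genericity statements are Zariski-open, so that a single matrix $A$ realising both conditions exists, at which point the conclusion is immediate from Definition~\ref{def:pombas}.
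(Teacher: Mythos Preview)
Your proof is correct and follows essentially the same route as the paper: the authors' argument for arbitrary characteristic is precisely that generic variables are $\delta$-regular, hence the leading ideal in such variables---which is $\gin{\I}$---is quasi-stable. You have merely made explicit the intersection of the two Zariski-open sets that the paper leaves implicit, and the paper additionally notes (redundantly) that in characteristic zero one already knows $\gin{\I}$ is stable. One small remark: there is no need to single out $\prec_{\mathrm{revlex}}$, since the genericity of $\delta$-regular coordinates holds for the given term order $\prec$ used to define $\gin{\I}$.
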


\begin{proof}
  For $\ch{\kk}=0$, the assertion is trivial, since then $\gin{\I}$ is even
  stable, as mentioned above.  For arbitrary $\ch{\kk}$, it follows simply
  from the fact that generic variables\footnote{Recall that we assume
    throughout that $\kk$ is an infinite field, although a sufficiently large
    finite field would also suffice \cite[Rem.~4.3.19]{wms:invol}.} are
  $\delta$-regular and thus yield a quasi-stable leading ideal.\qed
\end{proof}

The next corollary is a classical result \cite[Cor.~1.33]{mlg:gin} for which
we provide here a simple alternative proof.  The subsequent theorem extends
many well-known statements about $\gin{\I}$ to the leading ideal in
$\delta$-regular variables (for $\prec_{\mathrm{revlex}}$); they are all
trivial consequences of the properties of a Pommaret basis.

\begin{corollary}\label{cor:betti}
  Let $\I\lhd\P$ be an ideal and $\ch{\kk}=0$.  Then all bigraded Betti
  numbers satisfy the inequality
  $\beta_{i,j}(\P/\I)\leq\beta_{i,j}(\P/\gin{\I})$.
\end{corollary}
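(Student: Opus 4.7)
The plan is to pass to a system of coordinates in which both $\I$ and $\gin{\I}$ admit Pommaret bases with matching leading-term data, and then compare the two Pommaret resolutions supplied by Theorem~\ref{thm:pomres}. Since bigraded Betti numbers are invariant under linear coordinate changes, I would first replace $\I$ by its image under a generic linear substitution, so that the new variables are $\delta$-regular for the transformed ideal and $\lt{\I}=\gin{\I}$ with respect to $\prec_{\mathrm{revlex}}$. The problem then reduces to proving $\beta_{i,j}(\P/\I)\leq\beta_{i,j}(\P/\lt{\I})$ in these coordinates.

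Next I would exploit that $\ch{\kk}=0$ forces $\gin{\I}$ to be Borel-fixed, hence stable. By Lemma~\ref{lem:stab} the minimal basis of $\gin{\I}$ is therefore a Pommaret basis, so Theorem~\ref{thm:pomres} applied to $\gin{\I}$ already produces the \emph{minimal} graded free resolution of $\gin{\I}$ (this is the Eliahou--Kervaire resolution), whose graded ranks are exactly $\beta_{i,j}(\P/\gin{\I})$.

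For $\I$ itself, let $\H$ be its Pommaret basis in the chosen coordinates. The leading terms $\lt{\H}$ form the Pommaret basis of $\gin{\I}$, so the two Pommaret bases share the same multiset of pairs $(\cls{h},\deg{h})$. Reading formula~\eqref{eq:resrank} in its natural graded refinement, the Pommaret resolution of $\I$ is a (generally non-minimal) graded free resolution whose bigraded ranks coincide with those of the Pommaret resolution of $\gin{\I}$. Since any graded free resolution dominates the minimal one degree by degree, the inequality $\beta_{i,j}(\P/\I)\leq\beta_{i,j}(\P/\gin{\I})$ follows.

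The main obstacle is justifying the bigraded refinement of Theorem~\ref{thm:pomres}: as stated it only records the total ranks $r_i$. To close this gap I would appeal to the explicit construction of the Pommaret resolution from iterated non-multiplicative products of the generators $h\in\H$, noting that every basis element of $\P^{r_i}$ inherits a degree shift depending only on $\deg h$ and on the homological index $i$, and that this combinatorial pattern is identical for $\I$ and for any ideal with the same Pommaret-basis degree-class profile—in particular for $\lt{\I}$. Once this is in place, the comparison is entirely formal.
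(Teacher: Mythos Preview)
Your proposal is correct and follows essentially the same approach as the paper: choose coordinates with $\lt{\I}=\gin{\I}$, use that in characteristic zero $\gin{\I}$ is stable so its Pommaret resolution is minimal, and then compare with the (non-minimal) Pommaret resolution of $\I$ degree by degree, exploiting that both resolutions have identical graded ranks since they are governed by the same set $\lt{\H}$. The paper glosses over the bigraded refinement of \eqref{eq:resrank} in a single phrase (``analysing the resolution \eqref{eq:pomres} degree by degree''), whereas you explicitly flag and address this point via the construction of the resolution from non-multiplicative products---but the argument is the same.
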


\begin{proof}
  We choose variables $\X$ such that $\lt{\I}=\gin{\I}$.  By Lemma
  \ref{lem:genqs}, these variables are $\delta$-regular for the given ideal
  $\I$.  As $\ch{\kk}=0$, the generic initial ideal is stable and hence the
  bigraded version of (\ref{eq:resrank}) applied to $\lt{\I}$ yields the
  bigraded Betti number $\beta_{i,j}(\P/\gin{\I})$.  Now the claim follows
  immediately from analysing the resolution (\ref{eq:pomres}) degree by
  degree.\qed
\end{proof}

\begin{theorem}\label{thm:deltalt}
  Let the variables $\X$ be $\delta$-regular for the ideal $\I\lhd\P$ and the
  reverse lexicographic order $\prec_{\mathrm{revlex}}$.
  \begin{description}
  \item[(i)] $\pd{\I}=\pd{\lt{\I}}$.
  \item[(ii)] $\sat{\I}=\sat{\lt{\I}}$.
  \item[(iii)] $\reg{\I}=\reg{\lt{\I}}$.
  \item[(iv)] $\reg_{t}{\I}=\reg_{t}{\lt{\I}}$ for all\/ $0\leq
    t\leq\dim{(\P/\I)}$.
  \item[(v)] $a^{*}_{t}(\I)=a^{*}_{t}(\lt{\I})$ for all\/ $0\leq
    t\leq\dim{(\P/\I)}$.
  \item[(vi)] The extremal Betti numbers of $\I$ and $\lt{\I}$ occur at
    the same positions and have the same values.
  \item[(vii)] $\depth{\I}=\depth{\lt{\I}}$.
  \item[(viii)] $\P/\I$ is Cohen-Macaulay, if and only if $\P/\lt{\I}$ is
    Cohen-Macaulay.
  \end{description}
\end{theorem}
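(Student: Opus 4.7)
The plan is to observe that the $\delta$-regularity assumption yields a Pommaret basis $\H$ of $\I$ for $\prec_{\mathrm{revlex}}$, and that by Definition \ref{def:pombas} the set $\lt{\H}$ is then a Pommaret basis of the monomial ideal $\lt{\I}$. Since the polynomials in $\H$ have pairwise distinct leading terms, the multisets $\{(\cls{h},\deg{h})\mid h\in\H\}$ and $\{(\cls{\lt{h}},\deg{\lt{h}})\mid h\in\H\}$ coincide. Every invariant listed in (i)--(viii) depends only on this combinatorial data, so all of the equalities will drop out.

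For (i), (ii), and (iii), Theorem \ref{thm:regsatpom} gives $\pd{\I}=n-\cls{\H}$, $\sat{\I}=\deg{\H_{1}}$, and $\reg{\I}=\deg{\H}$; applying the same theorem to $\lt{\I}$ with its Pommaret basis $\lt{\H}$ yields the identical expressions, and these right-hand sides are manifestly the same for the two ideals. Assertions (iv) and (v) follow from Corollary \ref{cor:regat}: Proposition \ref{prop:qi} gives $q_{i}=\deg{\H_{i}}-1=\deg{\lt{\H}_{i}}-1$, and the formulas of Corollary \ref{cor:regat} then deliver $\reg_{t}{(\P/\I)}=\reg_{t}{(\P/\lt{\I})}$ and $a^{*}_{t}(\P/\I)=a^{*}_{t}(\P/\lt{\I})$, which translate directly into the stated equalities for $\I$ and $\lt{\I}$ (any fixed shift between $\reg_{t}{\I}$ and $\reg_{t}{(\P/\I)}$ affects both sides equally). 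For (vi), Remark \ref{rem:extbetti} expresses both the positions $(i_{j},q_{j}+i_{j})$ and the values of all extremal Betti numbers purely in terms of classes and degrees of generators in the Pommaret basis, so the shared combinatorial data produce identical extremal Betti numbers.

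Part (vii) follows either from the Auslander--Buchsbaum formula $\depth{\I}=n-\pd{\I}$ combined with (i), or more directly by applying Theorem \ref{thm:pbprop}(iii) to both $\I$ and $\lt{\I}$, which gives $\depth{(\P/\I)}=\cls{\H}-1=\cls{\lt{\H}}-1=\depth{(\P/\lt{\I})}$. Finally, (viii) combines (vii) with the standard observation that $\dim{(\P/\I)}=\dim{(\P/\lt{\I})}$, since passage to the initial ideal preserves the Hilbert function; this is also visible directly from Theorem \ref{thm:pbprop}(i), which identifies the same maximal strongly independent set $\{x_{1},\dots,x_{D}\}$ modulo both ideals.

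I do not anticipate a real obstacle: the whole argument is a bookkeeping exercise assembling results from Section \ref{sec:pombas} together with Corollary \ref{cor:regat}. The essential content of the theorem lies in the existence of a $\delta$-regular Pommaret basis and in the fact that the invariants in question can be read off from it; once that machinery is in place, (i)--(viii) become immediate corollaries.
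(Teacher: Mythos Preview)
Your proposal is correct and follows essentially the same approach as the paper: both arguments reduce each of (i)--(viii) to the observation that the relevant invariant is computed from the combinatorial data $(\cls{h},\deg{h})$ of the Pommaret basis, and this data is shared by $\H$ and $\lt{\H}$. The paper's own proof is more terse---it simply cites Theorem~\ref{thm:regsatpom} and Corollary~\ref{cor:regat} for (i)--(v), Remark~\ref{rem:extbetti} for (vi), and Theorem~\ref{thm:pbprop} for (vii)--(viii)---but your expanded version fills in exactly the bookkeeping that is implicit there.
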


\begin{proof}
  The assertions (i-v) are trivial corollaries of Theorem \ref{thm:regsatpom}
  and Corollary \ref{cor:regat}, respectively, where it is shown for all
  considered quantities that they depend only on the leading terms of the
  Pommaret basis of $\I$.  Assertion (vi) is a consequence of Remark
  \ref{rem:extbetti} and the assertions (vii) and (viii) follow from Theorem
  \ref{thm:pbprop}.\qed
\end{proof}

\begin{remark}
  In view of Part (viii), one may wonder whether a similar statement holds for
  Gorenstein rings.  In \cite[Ex.~5.5]{wms:noether} the ideal $\I=\langle
  z^{2}-xy, yz, y^{2}, xz, x^{2}\rangle\lhd\kk[x,y,z]$ is studied.  The used
  coordinates are $\delta$-regular for $\prec_{\mathrm{revlex}}$, as a
  Pommaret basis is obtained by adding the generator $x^{2}y$.  It follows
  from \cite[Thm.~5.4]{wms:noether} that $\P/\I$ is Gorenstein, but
  $\P/\lt{\I}$ not.  A computation with \cocoa\ \cite{cocoa} gives here
  $\gin{\I}=\langle z^{2}, yz, y^{2}, xz, xy, x^{3}\rangle$ (assuming
  $\ch{\kk}=0$) and again one may conclude with \cite[Thm.~5.4]{wms:noether}
  that $\P/\gin{\I}$ is not Gorenstein.
\end{remark}

\section{Componentwise Linear Ideals}\label{sec:complinid}

Given an ideal $\I\lhd\P$, we denote by $\I_{\langle d\rangle}=
\langle\I_{d}\rangle$ the ideal generated by the homogeneous
component~$\I_{d}$ of degree $d$.  Herzog and Hibi \cite{hh:complin} called
$\I$ \emph{componentwise linear}, if for every degree $d\geq0$ the ideal
$\I_{\langle d\rangle}= \langle\I_{d}\rangle$ has a linear resolution.  For a
connection with Pommaret bases, we need a refinement of $\delta$-regularity.

\begin{definition}
  The variables $\X$ are \emph{componentwise $\delta$-regular} for the ideal
  $\I$ and the order $\prec$, if all ideals $\I_{\langle d\rangle}$ for
  $d\geq0$ have finite Pommaret bases for $\prec$.
\end{definition}

It follows from the proof of \cite[Thm.~9.12]{wms:comb2} that for the
definition of componentwise $\delta$-regularity it suffices to consider the
finitely many degrees $d\leq\reg{\I}$.  Thus trivial modificiations of any
method for the construction of $\delta$-regular variables allow to determine
effectively componentwise $\delta$-regular variables.

\begin{theorem}[{\cite[Thm.~8.2, Thm.~9.12]{wms:comb2}}]\label{thm:complinres}
  Let the variables $\X$ be componentwise $\delta$-regular for the ideal
  $\I\lhd\P$ and the reverse lexicographic order.  If\/ $\I$ is componentwise
  linear, then the free resolution (\ref{eq:pomres}) of\/ $\I$ induced by the
  Pommaret basis $\H$ is minimal and the Betti numbers of\/ $\I$ are given by
  (\ref{eq:resrank}).  Conversely, if the resolution (\ref{eq:pomres}) is
  minimal, then the ideal $\I$ is componentwise linear.
\end{theorem}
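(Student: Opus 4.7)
The plan is to exploit one key rigidity: whenever the Pommaret basis of an ideal is concentrated in a single degree~$d$, the resolution~(\ref{eq:pomres}) it induces is automatically minimal. Indeed, each first syzygy arises by multiplying a Pommaret generator by one non-multiplicative variable, and iterating this yields generators of the $i$-th free module in internal degree $d+i$; hence the differentials have entries that are linear forms, so the resolution is a minimal, pure linear one.

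For the forward direction, assume $\I$ is componentwise linear. For each $d\ge 0$ the ideal $\I_{\langle d\rangle}$ has a linear resolution, so $\reg{\I_{\langle d\rangle}}=d$, and componentwise $\delta$-regularity gives a finite Pommaret basis $\H^{(d)}$ of $\I_{\langle d\rangle}$. By Theorem~\ref{thm:regsatpom}(i), $\H^{(d)}$ lies in degree exactly~$d$, so by the observation above the induced resolution of $\I_{\langle d\rangle}$ is minimal, yielding
\[
   \beta_i(\I_{\langle d\rangle})=\sum_{h\in\H^{(d)}}\binom{n-\cls{h}}{i}.
\]
I would then combine this with the Herzog--Hibi criterion
\[
   \beta_{i,i+d}(\I)=\beta_i(\I_{\langle d\rangle})-\beta_i(\mf\I_{\langle d-1\rangle})
\]
and with an explicit description of $\H^{(d)}$ in terms of $\H$: its elements are the degree-$d$ multiplicative extensions of generators of $\H$ of degree $\le d$, augmented by whatever non-multiplicative products in degree $d$ are required to close up the involutive cone. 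A careful cancellation, whereby the ``extra'' elements in $\H^{(d)}\setminus\H_d$ precisely account for the degree-$d$ Pommaret contribution of $\mf\I_{\langle d-1\rangle}$, leaves exactly $\sum_{h\in\H_d}\binom{n-\cls{h}}{i}$ in bidegree $(i,i+d)$, which matches the bigraded version of~(\ref{eq:resrank}) and proves minimality of the Pommaret resolution of $\I$.

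For the converse, assume~(\ref{eq:pomres}) for $\I$ is minimal, so that $\beta_{i,i+q}(\I)=\sum_{h\in\H_q}\binom{n-\cls{h}}{i}$ by the bigraded form of~(\ref{eq:resrank}). An analogous degree-by-degree analysis now runs in reverse: if for some $d$ the Pommaret basis $\H^{(d)}$ of $\I_{\langle d\rangle}$ contained an element of degree $>d$, this would produce a non-cancelling contribution to some $\beta_{i,i+q}(\I)$ via Herzog--Hibi, contradicting the prescribed Betti numbers. Hence $\H^{(d)}$ is concentrated in degree~$d$, Theorem~\ref{thm:regsatpom}(i) gives $\reg{\I_{\langle d\rangle}}=d$, and $\I_{\langle d\rangle}$ has a linear resolution for every~$d$, so $\I$ is componentwise linear. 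The main obstacle throughout is the precise bookkeeping of how $\H^{(d)}$ decomposes in terms of $\H$ and how its ``extra'' generators cancel against syzygy contributions under Herzog--Hibi; making this cancellation rigorous is the technical heart of the argument and is what forces the use of componentwise (rather than ordinary) $\delta$-regularity.
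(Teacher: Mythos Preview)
Note first that the present paper does not prove this theorem: it is quoted from \cite[Thm.~8.2, Thm.~9.12]{wms:comb2}. From the way \cite[Lemma~8.1]{wms:comb2} is used elsewhere here (in the proofs of Theorem~\ref{thm:complincrit} and Proposition~\ref{prop:linquot2}), the original argument is syzygy-based: the resolution (\ref{eq:pomres}) is minimal if and only if, for every $h\in\H$ and every non-multiplicative variable $y$, the involutive standard representation of $yh$ has no constant coefficients; the equivalence with componentwise linearity is then established directly in terms of this criterion, degree by degree. Your approach via the Herzog--Hibi Betti-number formula is genuinely different and, if it worked, would give an attractive numerical proof.

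However, there is a real gap in your converse direction. You invoke the identity
\[
   \beta_{i,i+d}(\I)=\beta_i(\I_{\langle d\rangle})-\beta_i(\mf\I_{\langle d-1\rangle})
\]
to argue that a hypothetical generator of $\H^{(d)}$ in degree $>d$ would force a forbidden contribution to $\beta_{i,i+q}(\I)$. But this Herzog--Hibi formula is a \emph{consequence} of componentwise linearity, not a general identity; using it to deduce componentwise linearity is circular. Without that assumption you have no control over how the graded Betti numbers of $\I$ relate to those of the truncations $\I_{\langle d\rangle}$, so the contradiction you want does not follow. The syzygy-coefficient criterion of \cite[Lemma~8.1]{wms:comb2} avoids this problem entirely, because it characterises minimality intrinsically and then links the absence of constant coefficients, one degree at a time, to the linearity of the resolution of $\I_{\langle d\rangle}$.

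In the forward direction your outline is more plausible, but the step you yourself flag as ``the technical heart''---matching $\H^{(d)}$ against $\H_d$ together with the Pommaret basis of $\mf\I_{\langle d-1\rangle}$ so that the extra generators cancel---is not carried out, and it is exactly where componentwise $\delta$-regularity must be exploited. As written, the argument is a program rather than a proof.
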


The following corollary generalises the analogous result for stable ideals to
componentwise linear ideals (Aramova et al. \cite[Thm.~1.2(a)]{ahh:stabbetti}
noted a version for $\gin{\I}$).  It is an immediate consequence of the linear
construction of the resolution (\ref{eq:pomres}) in \cite[Thm.~6.1]{wms:comb2}
and its minimality for componentwise linear ideals.

\begin{corollary}\label{cor:betticl}
  Let $\I\lhd\P$ be componentwise linear.  If the Betti number $\beta_{i,j}$
  does not vanish, then also all Betti numbers $\beta_{i',j}$ with $i'<i$ do
  not vanish.
\end{corollary}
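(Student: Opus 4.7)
The plan is to use Theorem~\ref{thm:complinres} to reduce the claim to the trivial monotonicity $\binom{n-k}{i}>0\Rightarrow\binom{n-k}{i'}>0$ for $i'<i$.

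First I would pass to componentwise $\delta$-regular variables for~$\I$, whose effective existence is guaranteed by the remark following Theorem~\ref{thm:complinres}. Componentwise linearity of~$\I$ together with Theorem~\ref{thm:complinres} then makes the Pommaret resolution~(\ref{eq:pomres}) minimal, so its ranks, refined by degree, already compute the bigraded Betti numbers of~$\I$. Inspection of the construction in~\cite[Thm.~6.1]{wms:comb2} shows that each Pommaret generator $h\in\H$ of class~$k$ and degree~$d$ contributes exactly $\binom{n-k}{i}$ free summands to the $i$-th module (one for each $i$-element subset of its $n-\cls{h}$ non-multiplicative variables), and that all of these summands sit in the linear strand determined by~$d$. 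Writing $\bq{0,d}{k}$ for the number of such generators, this refines~(\ref{eq:resrank}) to
$$\beta_{i,j}(\I)\;=\;\sum_{k}\binom{n-k}{i}\,\bq{0,j}{k}$$
under the linear-strand indexing in which the corollary is stated.

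The claim then follows at once. If $\beta_{i,j}(\I)\neq0$, some~$k$ satisfies both $\bq{0,j}{k}>0$ and $\binom{n-k}{i}>0$, the latter forcing $n-k\geq i$. For every $i'<i$ this gives $n-k\geq i>i'$ and hence $\binom{n-k}{i'}>0$, so the same generators of class~$k$ continue to contribute positively to $\beta_{i',j}(\I)$, which must therefore also be non-zero.

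The only step requiring genuine care is the bigraded refinement of~(\ref{eq:resrank}): I would have to verify that the non-multiplicative multiplications used to build~(\ref{eq:pomres}) in~\cite[Thm.~6.1]{wms:comb2} shift degrees in a manner compatible with the linear-strand indexing under which $\beta_{i,j}$ is being read, so that minimality from Theorem~\ref{thm:complinres} really turns the rank formula into the bigraded Betti-number formula above. Once that bookkeeping is settled, the remainder is a one-line observation about binomial coefficients.
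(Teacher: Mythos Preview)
Your proposal is correct and follows exactly the route the paper sketches: invoke Theorem~\ref{thm:complinres} to obtain minimality of the Pommaret resolution~(\ref{eq:pomres}) in componentwise $\delta$-regular variables, read off the bigraded Betti numbers from the linear construction in \cite[Thm.~6.1]{wms:comb2}, and finish with the binomial-coefficient monotonicity $\binom{n-k}{i}>0\Rightarrow\binom{n-k}{i'}>0$. Your explicit degree-refined formula $\beta_{i,j}=\sum_{k}\binom{n-k}{i}\bq{0,j}{k}$ and your care about the linear-strand indexing simply make the paper's one-line justification fully precise.
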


As a further corollary, we obtain a simple proof of an estimate given by
Aramova et al.~\cite[Cor.~1.5]{ahh:stabbetti} (based on
\cite[Thm.~2]{hk:betti}).

\begin{corollary}\label{cor:estbetti}
  Let $\I\lhd\P$ be a componentwise linear ideal with $\pd{\I}=p$.  Then
  the Betti numbers satisfy $\beta_{i}\geq\binom{p+1}{i+1}$.
\end{corollary}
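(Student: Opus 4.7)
The plan is to read the Betti numbers off the minimal Pommaret resolution and to reduce the estimate to a hockey stick summation. First I would fix componentwise $\delta$-regular variables for $\I$ (which exist by the discussion following the definition of componentwise $\delta$-regularity), so that by Theorem~\ref{thm:complinres} the Pommaret basis $\H$ induces a \emph{minimal} free resolution of $\I$ of the form~(\ref{eq:pomres}). Formula~(\ref{eq:resrank}) then provides $\beta_{i}=\sum_{k=d}^{n-i}\binom{n-k}{i}\bq{0}{k}$ with $d=\cls{\H}$, and Theorem~\ref{thm:regsatpom}(ii) identifies $p=n-d$.

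The whole estimate then reduces to the combinatorial claim that $\bq{0}{k}\geq 1$ for every $d\leq k\leq n$. Granting it, the hockey stick identity immediately yields
\begin{displaymath}
  \beta_{i}\;\geq\;\sum_{k=d}^{n-i}\binom{n-k}{i}\;=\;\binom{n-d+1}{i+1}\;=\;\binom{p+1}{i+1}\,,
\end{displaymath}
which is the asserted bound.

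To verify the claim I would induct on $k$. The base $k=d$ is immediate from the definition of $d=\cls{\H}$. For the step, pick $h\in\H_{k}$ with $\lt{h}=x_{k}^{\mu_{k}}\cdots x_{n}^{\mu_{n}}$ and $\mu_{k}\geq 1$. Characterisation~(vi) of Theorem~\ref{thm:quastab}, applied with $i=k$, $r=\mu_{k}$ and $j=k+1$, provides some $s\geq 0$ with $x_{k+1}^{s}\lt{h}/x_{k}^{\mu_{k}}\in\I$; since $\I$ is an ideal I may enlarge $s$ so that the resulting monomial $N=x_{k+1}^{\mu_{k+1}+s}x_{k+2}^{\mu_{k+2}}\cdots x_{n}^{\mu_{n}}$ has $x_{k+1}$-exponent at least one, making $\cls{N}=k+1$. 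The unique involutive divisor $h'\in\H$ of $N$ must satisfy $\cls{h'}\leq k+1$; at the same time $\cls{h'}\leq k$ is impossible, since it would force $x_{\cls{h'}}\mid h'\mid N$, contradicting that $N$ is supported on $\{x_{k+1},\dots,x_{n}\}$. Hence $h'\in\H_{k+1}$, completing the induction. The only subtle point is arranging that $N$ has class exactly $k+1$, which is why the freedom to enlarge $s$ matters; the rest is purely combinatorial manipulation.
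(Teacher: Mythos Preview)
Your approach mirrors the paper's exactly: both invoke Theorem~\ref{thm:complinres} to read off $\beta_i$ from (\ref{eq:resrank}), reduce to $\bq{0}{k}\geq 1$ for $d\leq k\leq n$, and conclude via the hockey-stick identity. The paper simply asserts this positivity as a consequence of $\delta$-regularity; you attempt to prove it, which is laudable, but your induction step contains a gap.

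The unjustified assertion is $\cls{h'}\leq k+1$. This does \emph{not} follow from $\cls{N}=k+1$ alone: e.\,g.\ in $\kk[x_1,x_2,x_3]$ with Pommaret basis $\{x_3\}$, the monomial $N=x_2x_3$ has class $2$ while its involutive divisor $x_3$ has class $3$. What rescues your particular $N$ is its special shape: $N$ and $\lt{h}$ share all exponents in $x_{k+2},\dots,x_n$. Thus if $\cls{h'}=m\geq k+2$, then $\lt{h'}$ (being supported on $\{x_m,\dots,x_n\}$) also divides $\lt{h}$, and $\lt{h}/\lt{h'}\in\kk[x_1,\dots,x_m]$ by the same computation as for $N/\lt{h'}$; hence $h'$ would be a second involutive divisor of $\lt{h}$ in $\H$ besides $h$ itself, contradicting uniqueness. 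Adding this sentence closes the gap. A minor point: Theorem~\ref{thm:quastab} concerns monomial ideals, so your membership statements should read $N\in\lt{\I}$ rather than $N\in\I$.
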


\begin{proof}
  Let $\H$ be the Pommaret basis of $\I$ for $\prec_{\mathrm{revlex}}$ in
  componentwise $\delta$-regular variables and $d=\cls{\H}$.  By Theorem
  \ref{thm:complinres}, (\ref{eq:pomres}) is the minimal resolution of $\I$
  and hence (\ref{eq:resrank}) gives us $\beta_{i}$.  By Theorem
  \ref{thm:pbprop}, $p=n-d$.  We also note that $\delta$-regularity implies
  that $\bq{0}{k}>0$ for all $d\leq k\leq n$.  Now we compute
  \begin{displaymath}
    \beta_{i}=\sum_{k=d}^{n-i}\binom{n-k}{i}\bq{0}{k}=
        \sum_{\ell=i}^{p}\binom{\ell}{i}\bq{0}{n-\ell}\geq
        \sum_{\ell=i}^{p}\binom{\ell}{i}=\binom{p+1}{i+1}
  \end{displaymath}
  by a well-known identity for binomial coefficients. \qed
\end{proof}

\begin{example}
  The estimate in Corollary \ref{cor:estbetti} is sharp.  It is realised by
  any componentwise linear ideal whose Pommaret basis satisfies $\bq{0}{i}=0$
  for $i<d$ and $\bq{0}{i}=1$ for $i\geq d$.  As a simple monomial example
  consider the ideal $\I$ generated by the $d$ terms
  $h_{1}=x_{n}^{\alpha_{n}+1}$,
  $h_{2}=x_{n}^{\alpha_{n}}x_{n-1}^{\alpha_{n-1}+1}$,\dots,
  $h_{d}=x_{n}^{\alpha_{n}}\cdots x_{d+1}^{\alpha_{d+1}}x_{d}^{\alpha_{d}+1}$
  for arbitrary exponents $\alpha_{i}\geq0$.  One easily verifies that
  $\H=\{h_{1},\dots,h_{d}\}$ is indeed simultaneously the Pommaret and the
  minimal basis of $\I$.
\end{example}

Recently, Nagel and R\"omer \cite[Thm.~2.5]{nr:complin} provided some criteria
for componentwise linearity based on $\gin{\I}$ (see also
\cite[Thm~1.1]{ahh:stabbetti} where the case $\ch{\kk}=0$ is treated).  We
will now show that again $\gin{\I}$ may be replaced by $\lt{\I}$, if one uses
componentwise $\delta$-regular variables.  Furthermore, our proof is
considerably simpler than the one by Nagel and R\"omer.

\begin{theorem}\label{thm:complincrit}
  Let the variables $\X$ be componentwise $\delta$-regular for the ideal
  $\I\lhd\P$ and the reverse lexicographic order.  Then the following
  statements are equivalent:
  \begin{description}
  \item[(i)] $\I$ is componentwise linear.
  \item[(ii)] $\lt{\I}$ is stable and all bigraded Betti numbers $\beta_{ij}$
    of\/ $\I$ and\/ $\lt{\I}$ coincide.
  \item[(iii)] $\lt{\I}$ is stable and all total Betti numbers $\beta_{i}$
    of\/ $\I$ and\/ $\lt{\I}$ coincide.
  \item[(iv)] $\lt{\I}$ is stable and $\beta_{0}(\I)=\beta_{0}(\lt{\I})$.
  \end{description}
\end{theorem}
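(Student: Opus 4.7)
The plan is to follow the cycle
(i)~$\Rightarrow$~(ii)~$\Rightarrow$~(iii)~$\Rightarrow$~(iv)~$\Rightarrow$~(i),
built on two tools: Theorem~\ref{thm:complinres}, which identifies componentwise
linearity of $\I$ with minimality of the Pommaret
resolution~(\ref{eq:pomres}); and Lemma~\ref{lem:stab}, which identifies
stability of a monomial ideal with coincidence of its Pommaret and minimal
bases.  Throughout, let $\H$ be the Pommaret basis of $\I$ for
$\prec_{\mathrm{revlex}}$, so that $\lt{\H}$ is the Pommaret basis of
$\lt{\I}$; both have cardinality $|\H|$ and yield Pommaret resolutions of a
common rank sequence $r_{ij}$ as in~(\ref{eq:resrank}).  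The implications
(ii)~$\Rightarrow$~(iii)~$\Rightarrow$~(iv) are immediate, by summing over $j$
and then specialising to $i=0$.  For (i)~$\Rightarrow$~(ii),
Theorem~\ref{thm:complinres} makes~(\ref{eq:pomres}) the minimal free
resolution of $\I$, so $\beta_{ij}(\I)=r_{ij}$ and in particular
$\beta_{0}(\I)=|\H|$.  Upper semicontinuity of bigraded Betti numbers under
Gr\"obner degeneration gives
$\beta_{0}(\I)\le\beta_{0}(\lt{\I})\le|\lt{\H}|=|\H|$, forcing
$\beta_{0}(\lt{\I})=|\lt{\H}|$, so $\lt{\H}$ is the minimal basis of $\lt{\I}$
and $\lt{\I}$ is stable by Lemma~\ref{lem:stab}.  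Since a stable monomial
ideal is componentwise linear, Theorem~\ref{thm:complinres} applied to
$\lt{\I}$ yields $\beta_{ij}(\lt{\I})=r_{ij}=\beta_{ij}(\I)$.

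For (iv)~$\Rightarrow$~(i), stability of $\lt{\I}$, Lemma~\ref{lem:stab} and
the hypothesis give $|\H|=\beta_{0}(\lt{\I})=\beta_{0}(\I)$, so $\H$ is
already a minimal generating set of $\I$.  A non-zero constant $P_{l}\in\kk$
in any involutive standard representation $x_{j}h=\sum_{l}P_{l}h_{l}$ would
exhibit $h_{l}\in\H$ as a polynomial combination of the remaining Pommaret
generators, contradicting minimality; hence the first differential
of~(\ref{eq:pomres}) has all entries in $\mf$.  Since $\lt{\I}$ is stable,
Theorem~\ref{thm:complinres} applied to $\lt{\I}$ shows that the Pommaret
resolution of $\lt{\I}$ is minimal, so $\beta_{ij}(\lt{\I})=r_{ij}$, while
upper semicontinuity yields $\beta_{ij}(\I)\le r_{ij}$ with pointwise equality
at $i=0$.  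The remaining task is to upgrade these inequalities to equalities
at every homological level; once this is achieved,~(\ref{eq:pomres}) is a
minimal resolution of $\I$ and Theorem~\ref{thm:complinres} gives componentwise
linearity.

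The main obstacle is precisely this propagation of minimality from the
zeroth step to all higher steps of~(\ref{eq:pomres}).  I plan to combine two
ingredients.  First, the recursive syzygy construction of
\cite[Thm.~6.1]{wms:comb2} builds the $i$-th differential out of involutive
standard representations at lower levels, so the constant-coefficient
argument used at step~$0$ can be iterated, invoking the
Eliahou--Kervaire rigidity of the stable Pommaret basis $\lt{\H}$.  Second, a
consecutive-cancellations comparison of the Betti tables of $\I$ and
$\lt{\I}$ forces any shortfall $\beta_{ij}(\I)<r_{ij}$ to be paired with a
matching shortfall at an adjacent homological position in the same internal
degree; the equality $\beta_{0j}(\I)=r_{0j}$ blocks the bottom of such a
chain, while the pure-strand shape of the stable Betti table of $\lt{\I}$
constrains the possibilities above, leaving no room for non-trivial
cancellations.
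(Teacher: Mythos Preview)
Your cycle and most of the individual steps match the paper's proof.  For
(i)$\Rightarrow$(ii) the paper argues slightly more tersely (it says
``minimality of (\ref{eq:pomres}) immediately implies $\lt{\I}$ stable''),
but your explicit use of upper semicontinuity to force
$\beta_{0}(\lt{\I})=|\lt{\H}|$ is exactly what is behind that word
``immediately'', and the rest of the implication is the same.  The trivial
implications (ii)$\Rightarrow$(iii)$\Rightarrow$(iv) are identical.

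The only substantive difference is in (iv)$\Rightarrow$(i), and here you have
a gap that you correctly flag as ``the main obstacle''.  The paper does
\emph{not} climb the resolution level by level and does not use consecutive
cancellations: once it knows that every involutive standard representation
$x_{j}h=\sum_{l}P_{l}h_{l}$ is free of constant coefficients, it simply
invokes \cite[Lemma~8.1]{wms:comb2}, which states precisely that this
condition on the \emph{first} syzygies already forces the entire Pommaret
resolution (\ref{eq:pomres}) to be minimal.  Your ``first ingredient'' is this
lemma in disguise: the higher differentials in (\ref{eq:pomres}) are built
combinatorially out of the first syzygies, so absence of constants propagates
automatically.  You should just cite the lemma rather than sketch a
re-derivation.

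Your ``second ingredient'' should be dropped.  A consecutive-cancellations
argument between the ranks $r_{ij}$ and $\beta_{ij}(\I)$ gives
$r_{ij}-\beta_{ij}(\I)=c_{i,j}+c_{i+1,j}$ where $c_{i,j}$ counts trivial
summands between homological positions $i-1$ and $i$ in internal degree $j$.
The equality $\beta_{0j}(\I)=r_{0j}$ only forces $c_{1,j}=0$; it says nothing
about $c_{i,j}$ for $i\geq2$, so cancellations between positions $1$ and $2$,
$2$ and $3$, etc.\ are not blocked from below.  The appeal to a ``pure-strand
shape'' of the stable Betti table is too vague to close this: the
Eliahou--Kervaire Betti numbers of a stable ideal have no special strand
structure that would forbid such higher cancellations on numerical grounds
alone.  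The structural input you actually need is exactly
\cite[Lemma~8.1]{wms:comb2}.
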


\begin{proof}
  The implication ``(i)${}\Rightarrow{}$(ii)'' is a simple consequence of
  Theorem \ref{thm:complinres}.  Since our variables are componentwise
  $\delta$-regular, the resolution (\ref{eq:pomres}) is minimal.  This implies
  immediately that $\lt{\I}$ is stable.  Applying Theorem \ref{thm:pomres} to
  the Pommaret basis $\lt{\H}$ of $\lt{\I}$ yields the minimal resolution of
  $\lt{\I}$.  In both cases, the leading terms of all syzygies are determined
  by $\lt{\H}$ and hence the bigraded Betti numbers of $\I$ and $\lt{\I}$
  coincide.

  The implications ``(ii)${}\Rightarrow{}$(iii)'' and
  ``(iii)${}\Rightarrow{}$(iv)'' are trivial.  Thus there only remains to
  prove ``(iv)${}\Rightarrow{}$(i)''.  Let $\H$ be the Pommaret basis of $\I$.
  Since $\lt{\I}$ is stable by assumption, $\lt{\H}$ is its minimal basis by
  Lemma \ref{lem:stab} and $\beta_{0}(\lt{\I})$ equals the number of elements
  of $\H$.  The assumption $\beta_{0}(\I)=\beta_{0}(\lt{\I})$ implies that
  $\H$ is a minimal generating system of $\I$.  Hence, none of the syzygies
  obtained from the involutive standard representations of the
  non-multiplicative products $yh$ with $h\in\H$ and $y\in\nmult{\X}{P}{h}$
  may contain a non-vanishing constant coefficients.  By
  \cite[Lemma~8.1]{wms:comb2}, this observation implies that the resolution
  (\ref{eq:pomres}) induced by $\H$ is minimal and hence the ideal $\I$ is
  componentwise linear by Theorem \ref{thm:complinres}.\qed
\end{proof}

\section{Linear Quotients}\label{sec:linquot}

Linear quotients were introduced by Herzog and Takayama \cite{ht:rescones} in
the context of constructing iteratively a free resolution via mapping cones.
As a special case, they considered monomial ideals where certain colon ideals
defined by an ordered minimal basis are generated by variables. Their
definition was generalised by Sharifan and Varabaro \cite{sv:linquot} to
arbitrary ideals.

\begin{definition}
  Let $\I\lhd\P$ be an ideal and $\F=\{f_{1},\dots,f_{r}\}$ an ordered basis
  of it.  Then $\I$ has \emph{linear quotients} with respect to $\F$, if for
  each $1<k\leq r$ the ideal $\langle f_{1},\dots,f_{k-1}\rangle:f_{k}$ is
  generated by a subset $\X_{k}\subseteq\X$ of variables.
\end{definition}

We show first that in the monomial case this concept captures the essence of a
Pommaret basis.  For this purpose, we ``invert'' some notions introduced in
\cite{wms:comb2}.  We associate with a monomial Pommaret basis $\H$ a directed
graph, its \emph{$P$-graph}.  Its vertices are the elements of $\H$.  Given a
non-multiplicative variable $x_j\in\nmult{\X}{P}{h}$ for a generator $h\in\H$,
there exists a unique involutive divisor $\bar h\in\H$ of $x_jh$ and we
include a directed edge from $h$ to $\bar h$.

An ordering of the elements of $\H$ is called an \emph{inverse $P$-ordering},
if $\alpha>\beta$ whenever the $P$-graph contains a path from $h_{\alpha}$ to
$h_{\beta}$.  It is straightforward to describe explicitly an inverse
$P$-ordering: we set $\alpha>\beta$, if $\cls{h_{\alpha}}<\cls{h_{\beta}}$ or
if $\cls{h_{\alpha}}=\cls{h_{\beta}}$ and
$h_{\alpha}\prec_{\mathrm{lex}}h_{\beta}$, i.\,e.\ we sort the generators
$h_\alpha$ first by their class and then within each class lexicographically
(according to our reverse conventions!).  One easily verifies that this
defines an inverse $P$-ordering.

\begin{multicols}{2}
\begin{example}\label{ex:Pgraph}
  Consider the monomial ideal $\I\subset\kk[x,y,z]$ generated by the six terms
  $h_1=z^2$, $h_2=yz$, $h_3=y^2$, $h_4=xz$, $h_5=xy$ and $h_6=x^2$.  One
  easily verifies that these terms form a Pommaret basis of $\I$.  The
  $P$-graph in (\ref{eq:Pgraph}) shows that the generators are already
  inversely $P$-ordered, namely according to the description above.

  \begin{equation}\label{eq:Pgraph}
    \begin{xy}
      \xygraph{%
        !~:{@{->}}%
        []{h_6}
        (:[ur]{h_5}
         (:[r]{h_3}:[d]{h_2}:[d]{h_1},
          :[dr]{h_2}),
         :[dr]{h_4}
          (:[ur]{h_2},:[r]{h_1}))}
    \end{xy}
  \end{equation}  
\end{example}  
\end{multicols}

\begin{proposition}\label{prop:linquot}
  Let $\H=\{h_{1},\dots,h_{r}\}$ be an inversely $P$-ordered monomial Pommaret
  basis of the quasi-stable monomial ideal $\I\lhd\P$.  Then the ideal $\I$
  possesses linear quotients with respect to the basis $\H$ and
  \begin{equation}\label{eq:colon}
    \langle h_{1},\dots,h_{k-1}\rangle:h_{k}=\langle\nmult{\X}{P}{h_{k}}\rangle
    \qquad k=1,\dots r\;.
  \end{equation}
  Conversely, assume that $\H=\{h_{1},\dots,h_{r}\}$ is a monomial generating
  set of the monomial ideal $\I\lhd\P$ such that (\ref{eq:colon}) is
  satisfied.  Then $\I$ is quasi-stable and $\H$ its Pommaret basis.
\end{proposition}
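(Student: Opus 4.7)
Fix $k$ and analyse the colon ideal $\langle h_1,\dots,h_{k-1}\rangle:h_k$. The inclusion $\supseteq$ is immediate from the $P$-graph construction: for each $y\in\nmult{\X}{P}{h_k}$, the unique involutive divisor $\bar h\in\H$ of $yh_k$ satisfies $\bar h\neq h_k$ (else $y$ would be multiplicative for $h_k$) and is connected to $h_k$ by the edge $h_k\to\bar h$, so the inverse $P$-ordering places $\bar h$ before $h_k$ in the list and $yh_k\in\langle h_1,\dots,h_{k-1}\rangle$. For $\subseteq$, I take a monomial $x^\mu$ in the colon ideal, suppose for contradiction that $x^\mu\in\kk[\mult{\X}{P}{h_k}]$, and pick $h_j$ with $j<k$ and $h_j\mid x^\mu h_k$. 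The explicit description of the inverse $P$-ordering leaves two cases: $\cls{h_j}>\cls{h_k}$ (Case A) or $\cls{h_j}=\cls{h_k}$ with $h_j\succ_{\mathrm{lex}}h_k$ (Case B).

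Case B is dispatched directly. Since $\mu_i=0$ for $i>\cls{h_k}$, the divisibility $h_j\mid x^\mu h_k$ forces $\mu^{(j)}_i\leq\mu^{(k)}_i$ there; combined with the reversed-lex dominance of $h_j$ over $h_k$ this yields $\mu^{(j)}_i=\mu^{(k)}_i$ for all $i>\cls{h_k}$ and $\mu^{(j)}_{\cls{h_k}}>\mu^{(k)}_{\cls{h_k}}$, so $h_j=x_{\cls{h_k}}^s h_k$ with $s\geq1$ and $h_k$ involutively divides $h_j$, contradicting the Pommaret basis property. In Case A, the same exponent comparison now gives $h_j\mid h_k$ in the standard sense. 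If additionally $h_j$ involutively divides $h_k$ we again obtain a contradiction; otherwise some position $i_0>\cls{h_j}$ satisfies $\mu^{(k)}_{i_0}>\mu^{(j)}_{i_0}$, and this is the main obstacle.

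To resolve it, I iterate. Let $h_\ell\in\H$ be the unique involutive divisor of the non-multiplicative product $x_{i_0}h_j$, so $x_{i_0}h_j=p\,h_\ell$ with $p\in\kk[\mult{\X}{P}{h_\ell}]$. The identity $h_k=(h_k/h_j)\cdot h_j=((h_k/h_j)/x_{i_0})\cdot p\cdot h_\ell$ (which uses $x_{i_0}\mid h_k/h_j$) shows $h_\ell\mid h_k$; a short exponent check rules out $h_\ell=h_k$, and $\cls{h_\ell}\geq\cls{h_j}>\cls{h_k}$ since involutive divisors can only have larger class than the element they divide. Tracking the non-negative integer $D(h,h_k):=\sum_{i>\cls{h}}(\mu^{(k)}_i-\mu^{(h)}_i)$, a direct computation distinguishing $i_0>\cls{h_\ell}$ from $i_0\leq\cls{h_\ell}$ yields $D(h_\ell,h_k)<D(h_j,h_k)$. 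Replacing $h_j$ by $h_\ell$ and iterating, the strictly decreasing sequence of values of $D$ reaches $0$ in finitely many steps; but $D=0$ paired with standard divisibility means the current element of $\H\setminus\{h_k\}$ involutively divides $h_k$, the desired contradiction.

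\textbf{Converse.} Assume (\ref{eq:colon}) holds. Taking $k=1$ forces $\nmult{\X}{P}{h_1}=\emptyset$, i.e.\ $\cls{h_1}=n$. Pairwise disjointness of the involutive cones $\kk[\mult{\X}{P}{h_k}]\cdot h_k$ is then immediate: an overlap between the cones of $h_k$ and $h_{k'}$ with $k<k'$ would place some $x^{\mu'}\in\kk[\mult{\X}{P}{h_{k'}}]$ simultaneously into $\langle\nmult{\X}{P}{h_{k'}}\rangle$, which is impossible. Coverage of $\I$ by these cones is proved by descending induction on $j$: any monomial $c\,h_j\in\I$ with $c\notin\kk[\mult{\X}{P}{h_j}]$ can be rewritten, via (\ref{eq:colon}) applied to a non-multiplicative variable of $h_j$ dividing $c$, as a multiple of some $h_{j'}$ with $j'<j$; the induction stops at $j=1$ because $\cls{h_1}=n$ makes every multiple of $h_1$ lie in its cone. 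Hence $\H$ verifies (\ref{eq:pombas}), is a finite Pommaret basis, and $\I$ is quasi-stable.
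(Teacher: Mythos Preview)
Your inclusion $\supseteq$ in the forward direction matches the paper's.  For the reverse inclusion $\subseteq$, however, you invoke the \emph{explicit} inverse $P$-ordering (first by class, then lexicographically) to reduce to Cases~A and~B.  The proposition is stated for an arbitrary inverse $P$-ordering, and in general $j<k$ does not force $\cls{h_{j}}\geq\cls{h_{k}}$ (for instance, in Example~\ref{ex:Pgraph} the ordering $h_{1},h_{2},h_{4},h_{3},h_{5},h_{6}$ is also a valid inverse $P$-ordering, and there position~3 has class~$1$ while position~$4$ has class~$2$).  Your case analysis therefore does not cover all situations, and the place where this bites is the step ``$\cls{h_{\ell}}\geq\cls{h_{j}}>\cls{h_{k}}$'' used to exclude $h_{\ell}=h_{k}$ during the iteration.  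The paper sidesteps this entirely: starting from $th_{k}=s_{1}h_{i_{1}}$ with $i_{1}<k$ and $t$ multiplicative for $h_{k}$, one notes that $s_{1}$ must contain a non-multiplicative variable $y_{1}$ of $h_{i_{1}}$ (otherwise $h_{i_{1}}\neq h_{k}$ would be a second involutive divisor of $th_{k}$) and replaces $y_{1}h_{i_{1}}$ by its involutive standard representation $s_{2}h_{i_{2}}$.  Because the $P$-graph has an edge $h_{i_{1}}\to h_{i_{2}}$, the defining property of \emph{any} inverse $P$-ordering gives $i_{2}<i_{1}$.  Iterating yields a strictly decreasing sequence of indices $i_{1}>i_{2}>\cdots$, so the process terminates with some $h_{i}$ (still with $i<i_{1}<k$, hence $h_{i}\neq h_{k}$) involutively dividing $th_{k}$---the desired contradiction.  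This is both shorter than your invariant~$D$ and insensitive to which inverse $P$-ordering is used.

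Your converse is correct and close in spirit to the paper's.  You verify the direct-sum condition~(\ref{eq:pombas}) directly (disjointness from the colon identity, coverage by descent on the index), whereas the paper establishes the criterion of Proposition~\ref{prop:pbcrit} via the same descent.  Both routes hinge on the observation that (\ref{eq:colon}) for $k=1$ forces $\cls{h_{1}}=n$, and the rewriting step is identical.
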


\begin{proof}
  Let $y\in\nmult{\X}{P}{h_{k}}$ be a non-multiplicative variable for
  $h_{k}\in\H$.  Since $\H$ is a Pommaret basis, the product $yh_{k}$
  possesses an involutive divisor $h_{i}\in\H$ and, by definition, the
  $P$-graph of $\H$ contains an edge from $k$ to $i$.  Thus $i<k$ for an
  inverse $P$-ordering, which proves the inclusion ``$\supseteq$''.

  The following argument shows that the inclusion cannot be strict.  Consider
  a term $t\in\kk[\mult{\X}{P}{h_{k}}]$ consisting entirely of multiplicative
  variables and assume that $th_{k}\in\langle h_{1},\dots,h_{k-1}\rangle$,
  i.\,e.\ $th_{k}=s_{1}h_{i_{1}}$ for some term $s_{1}\in\kk[\X]$ and some
  index $i_{1}<k$.  By definition of a Pommaret basis, $s_{1}$ must contain at
  least one non-multiplicative variable $y_{1}$ of $h_{i_{1}}$.  But now we
  may rewrite $y_{1}h_{i_{1}}=s_{2}h_{i_{2}}$ with $i_{2}<i_{1}$ and
  $s_{2}\in\kk[\mult{\X}{P}{h_{i_{2}}}]$.  Since this implies
  $\cls{h_{2}}\geq\cls{h_{1}}$, we find
  $\mult{\X}{P}{h_{i_{1}}}\subseteq\mult{\X}{P}{h_{i_{2}}}$.  Hence after a
  finite number of iterations we arrive at a representation $th_{k}=sh_{i}$
  where $s\in\kk[\mult{\X}{P}{h_{i}}]$ which is, however, not possible for a
  Pommaret basis.

  For the converse, we show by a finite induction over $k$ that every
  non-multi\-plicative product $yh_{k}$ with $y\in\nmult{\X}{P}{h_{k}}$
  possesses an involutive divisor $h_{i}$ with $i<k$ which implies our
  assertion by Proposition \ref{prop:pbcrit}.  For $k=1$ nothing is to be
  shown, since (\ref{eq:colon}) implies in this case that all variables are
  multiplicative for $h_{1}$ (and thus this generator is of the form
  $h_{1}=x_{n}^{\ell}$ for some $\ell>0$), and $k=2$ is trivial.  Assume that
  our claim was true for $h_{1},h_{2},\dots,h_{k-1}$.  Because of
  (\ref{eq:colon}), we may write $yh_{k}=t_{1}h_{i_{1}}$ for some $i_{1}<k$.
  If $t_{1}\in\kk[\mult{\X}{P}{h_{i_{1}}}]$, we set $i=i_{1}$ and are done.
  Otherwise, $t_{1}$ contains a non-multiplicative variable
  $y_{1}\in\nmult{\X}{P}{h_{i_{1}}}$.  By our induction assumption,
  $y_{1}h_{i_{1}}$ has an involutive divisor $h_{i_{2}}$ with $i_{2}<i_{1}$
  leading to an alternative representation $yh_{k}=t_{2}h_{i_{2}}$.  Now we
  iterate and find after finitely many steps an involutive divisor $h_{i}$ of
  $yh_{k}$, since the sequence $i_{1}>i_{2}>\cdots$ is strictly decreasing and
  $h_{1}$ has no non-multiplicative variables.\qed
\end{proof}

\begin{remark}
  As we are here exclusively concerned with Pommaret bases, we formulated and
  proved the above result only for this special case.  However, Proposition
  \ref{prop:linquot} remains valid for any involutive basis with respect to a
  \emph{continuous} involutive division $L$ (and thus for all divisions of
  practical interest).  The continuity of $L$ is needed here for two reasons.
  Firstly, it guarantees the existence of an $L$-ordering, as for such
  divisions the $L$-graph is always acyclic \cite[Lemma~5.5]{wms:comb2}.
  Secondly, the above argument that finitely many iterations lead to a
  representation $th_{k}=sh_{i}$ where $s$ contains only multiplicative
  variables for $h_{i}$ is specific for the Pommaret division and cannot be
  generalised.  However, the very definition of continuity
  \cite[Def.~4.9]{gb:invbas} ensures that for continuous divisions such a
  rewriting cannot be done infinitely often.
\end{remark}

In general, we cannot expect that the second part of Proposition
\ref{prop:linquot} remains true, when we consider arbitrary polynomial ideals.
However, for the first part we find the following variation of
\cite[Thm.~2.3]{sv:linquot}.

\begin{proposition}\label{prop:linquot2}
  Let $\H$ be a Pommaret basis of the polynomial ideal $\I\lhd\P$ for the term
  order $\prec$ and $h'\in\P$ a polynomial with $\lt{h'}\notin\lt{\H}$.  If
  $\I:h'= \langle\nmult{\X}{P}{h'}\rangle$, then $\H'=\H\cup\{h'\}$ is a
  Pommaret basis of $\J=\I+\langle h'\rangle$.  If furthermore $\I$ is
  componentwise linear, the variables $\X$ are componentwise $\delta$-regular
  and $\H'$ is a minimal basis of $\J$, then $\J$ is componentwise linear,
  too.
\end{proposition}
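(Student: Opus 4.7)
The plan is to dispatch the two assertions separately. For the first, I would verify the Pommaret-basis criterion of Proposition~\ref{prop:pbcrit} directly for $\H'$. For the second, I would show that the free resolution (\ref{eq:pomres}) of $\J$ induced by $\H'$ is minimal and then invoke the converse direction of Theorem~\ref{thm:complinres}.

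First I would check the two hypotheses of Proposition~\ref{prop:pbcrit} for $\H'$. Incomparability of the leading terms reduces to placing $\lt{h'}$ relative to $\lt{\H}$: the assumption $\lt{h'}\notin\lt{\H}$ rules out coincidences, and one verifies (replacing $h'$ by its involutive normal form modulo $\H$ if needed, which alters neither $\J$ nor the colon ideal $\I:h'$) that $\lt{h'}$ does not lie in the involutive cone of $\lt{\H}$. Next, for every non-multiplicative product $x_{j}h$ with $h\in\H'$ an involutive standard representation with respect to $\H'$ must exist. If $h\in\H$, such a representation already exists with respect to the Pommaret basis $\H$ by Proposition~\ref{prop:pbisr}, and it remains valid when read as a representation with respect to the enlarged set $\H'$. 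If $h=h'$, the hypothesis gives $x_{j}\in\nmult{\X}{P}{h'}\subseteq\I:h'$, so $x_{j}h'\in\I$ and Proposition~\ref{prop:pbisr} again supplies the required representation.

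For the second part, \cite[Lemma~8.1]{wms:comb2} (as exploited in the proof of Theorem~\ref{thm:complincrit}) tells us that minimality of the resolution (\ref{eq:pomres}) built from $\H'$ is equivalent to the absence of non-zero constant coefficients in the involutive standard representations $x_{j}h=\sum_{\tilde h\in\H'}P_{\tilde h}\tilde h$ of the non-multiplicative products. For $h\in\H$ these are the same representations that occur for $\I$, so the minimality of the resolution of $\I$ (which holds by Theorem~\ref{thm:complinres} applied to the componentwise linear ideal $\I$ in componentwise $\delta$-regular variables) already forces the constants to vanish. For $h=h'$, suppose towards a contradiction that some $P_{\tilde h_{0}}$ is a non-zero scalar $c$; rearranging $x_{j}h'=c\,\tilde h_{0}+\sum_{\tilde h\neq\tilde h_{0}}P_{\tilde h}\tilde h$ expresses $\tilde h_{0}$ as a $\P$-linear combination of $h'$ and the remaining elements of $\H$, contradicting the assumption that $\H'$ is a minimal basis of $\J$. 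Once this minimality is secured, the converse half of Theorem~\ref{thm:complinres} yields componentwise linearity of $\J$.

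The principal obstacle is the first condition of Proposition~\ref{prop:pbcrit}: showing that no $\lt{h}$ with $h\in\H$ is an involutive divisor of $\lt{h'}$, equivalently $\lt{h'}\notin\lt{\I}$. The bare hypothesis $\lt{h'}\notin\lt{\H}$ does not suffice on its own, and one has to argue that any involutive reduction of $h'$ modulo $\H$ would force the colon ideal $\I:h'$ to contain elements outside $\langle\nmult{\X}{P}{h'}\rangle$, contradicting the standing assumption. A secondary subtlety is that the converse of Theorem~\ref{thm:complinres} requires $\X$ to be componentwise $\delta$-regular for $\J$ itself; the explicit shape of $\H'$ from the first part is what makes this transfer of componentwise $\delta$-regularity from $\I$ to $\J$ plausible.
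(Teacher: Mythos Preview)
Your approach is essentially the paper's: for the first assertion the paper also observes that the colon hypothesis forces every non-multiplicative product $x_{j}h'$ into $\I$, where it inherits an involutive standard representation from $\H$, and then appeals (implicitly) to Proposition~\ref{prop:pbcrit}; for the second assertion the paper likewise argues that the syzygies coming from $\H$ are constant-free by componentwise linearity of $\I$, that minimality of $\H'$ kills the constants in the new syzygies for $h'$, and then invokes \cite[Lemma~8.1]{wms:comb2} together with Theorem~\ref{thm:complinres}.

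The two points you single out as obstacles---that $\lt{h'}\notin\lt{\H}$ alone does not guarantee $\lt{h'}\notin\lt{\I}$, and that Theorem~\ref{thm:complinres} requires componentwise $\delta$-regularity for $\J$ rather than for $\I$---are genuine subtleties, and the paper's proof simply passes over both without comment. So you are not missing an idea that the paper supplies; you are being more scrupulous than the published argument.
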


\begin{proof}
  If $\I:h'= \langle\nmult{\X}{P}{h'}\rangle$, then all products of $h'$ with
  one of its non-multiplicative variables lie in $\I$ and hence possess an
  involutive standard representation with respect to $\H$.  This immediately
  implies the first assertion.

  In componentwise $\delta$-regular variables all syzygies obtained from the
  involutive standard representations of products $yh$ with $h\in\H$ and
  $y\in\nmult{\X}{P}{h}$ are free of constant coefficients, if $\I$ is
  componentwise linear.  If $\H'$ is a minimal basis of $\J$, the same is true
  for all syzygies obtained from products $yh'$ with $y\in\nmult{\X}{P}{h'}$.
  Hence we can again conclude with \cite[Lemma~8.1]{wms:comb2} that the
  resolution of $\J$ induced by $\H'$ is minimal and $\J$ componentwise linear
  by Theorem \ref{thm:complinres}.\qed
\end{proof}

%\section{Conclusions}

\bibliography{QuasiGen.bib}
\bibliographystyle{splncs03}

\end{document}